\documentclass{article}

\usepackage{microtype}
\usepackage{graphicx}
\usepackage{subcaption}
\usepackage{booktabs} 
\usepackage{xspace}
\usepackage[accepted]{icml2026} 
\usepackage{hyperref}
\usepackage{enumitem}
\usepackage{makecell}
\usepackage{footmisc} 
\usepackage{hyperref}

\usepackage{icml2026}

\usepackage{amsmath}
\usepackage{amssymb}
\usepackage{mathtools}
\usepackage{amsthm}

\usepackage[capitalize,noabbrev]{cleveref}

\theoremstyle{plain}
\newtheorem{theorem}{Theorem}[section]

\newtheorem{lemma}[theorem]{Lemma}

\theoremstyle{definition}
\newtheorem{definition}[theorem]{Definition}
\newtheorem{assumption}[theorem]{Assumption}
\theoremstyle{remark}

\newtheorem{example}{Example}[section] 
\DeclareMathOperator{\Prb}{Pr}

\makeatletter
\pretocmd\equation{%
  \setlength{\abovedisplayskip}{4pt}%
  \setlength{\belowdisplayskip}{4pt}%
}{}{}

\pretocmd\align{%
  \setlength{\abovedisplayskip}{4pt}%
  \setlength{\belowdisplayskip}{4pt}%
}{}{}

\pretocmd\gather{%
  \setlength{\abovedisplayskip}{4pt}%
  \setlength{\belowdisplayskip}{4pt}%
}{}{}
\makeatother

\usepackage[textsize=tiny]{todonotes}

\icmltitlerunning{\ours: Optimizing Semantic Caching via Multi-Vector Retrieval and Learned Prompt Segmentation}

\usepackage{multirow}

\newcommand{\ours}{MVR-cache\xspace}

\begin{document}

\twocolumn[
  \icmltitle{\ours: Optimizing Semantic Caching via Multi-Vector Retrieval and Learned Prompt Segmentation}




  \begin{icmlauthorlist}
    \icmlauthor{Ali Noshad}{yyy}
    \icmlauthor{Zishan Zheng}{sch}
    \icmlauthor{Yinjun Wu}{yyy}
  \end{icmlauthorlist}

  \icmlaffiliation{yyy}{School of Computer Science, Peking University, Beijing, China}
  \icmlaffiliation{sch}{School of Information, Renmin University of China, Beijing, China}

  \icmlcorrespondingauthor{Yinjun Wu}{wuyinjun@pku.edu.cn}

  \icmlkeywords{Machine Learning, ICML}

  \vskip 0.3in
]



\printAffiliationsAndNotice{}  

\begin{abstract}

To reduce LLM costs and latency, semantic caching systems must accurately identify when a new prompt matches a cached one. Current methods often rely on simplistic similarity measures, which limit their effectiveness. We introduce \ours, a novel semantic caching approach that significantly improves retrieval accuracy by integrating Multi-Vector Retrieval (MVR). \ours\ is built upon a learnable segmentation model that intelligently splits prompts, enabling fine-grained similarity comparisons via MaxSim. We derive the model's training objective from a rigorous theoretical analysis. This can ensure that optimizing this objective directly maximizes cache hits under strict correctness constraints. To solve the resulting non-differentiable combinatorial optimization problem, we leverage a reinforcement learning-based training strategy with the theoretically grounded objectives as the reward. Experimental results on established benchmarks across diverse tasks confirm that in comparison to the state-of-the-art, \ours\ consistently increases the cache hit rates by up to 37\% while maintaining the same correctness guarantees. \ours  is available at \url{https://github.com/PKU-SDS-lab/MVR-Cache}
\end{abstract}

\section{Introduction}\label{sec: intro}


Semantic caching has emerged as an effective mechanism for reducing the latency and computational cost of large language model (LLM) inference \cite{xiong2024search,schroeder2025vcache}. Traditional caches based on exact string matches fail to exploit the underlying semantic similarity between different, but conceptually equivalent, prompts—requiring redundant model invocations.
To address this limitation, semantic caches can serve paraphrased or semantically related queries by embedding prompts into a semantic vector space and reusing responses based on similarity, substantially increasing the {\em cache hit rate, i.e., the ratio of reusing the cache's response}, and thus improving system efficiency~\cite{bang2023gptcache,dasgupta2024wallmartcache}. Production examples include Azure semantic caching option for LLM APIs \footnotemark[1] \footnotetext[1]{\scriptsize \url{https://learn.microsoft.com/en-us/azure/api-management/azure-openai-enable-semantic-caching}}, and LiteLLM caching system \footnotemark[2] \footnotetext[2]{\scriptsize \url{https://docs.litellm.ai/docs/proxy/caching}}


Existing semantic caching methods primarily focus on how to design reasonable cache policies to determine when to reuse the cache or not, based on the similarity scores between incoming prompts and their nearest neighbors in the cache. 
For instance, the static policy approaches~\cite{dasgupta2024wallmartcache,li2024scalm,bang2023gptcache}\footnotemark[1] \footnotemark[2] compare such similarity scores against a global threshold to decide the cache's reuse, while the recently proposed vCache method~\cite{schroeder2025vcache} employs dynamically learned, prompt-specific thresholds to provably guarantee correct response. Only when such similarity scores are high enough, the cached response of the nearest neighbors is reused. 
However, the similarity measures used in these methods are all simple ones, such as cosine similarity between the prompts' embeddings, which may not accurately capture the subtle semantic differences in prompts, particularly for those semantically complex ones. 
Consequently, semantically dissimilar prompts may be incorrectly matched, degrading the cache hit rate.


For instance, we draw a prompt, $x$, from the SemCacheClassification dataset \cite{schroeder2025vcache} in Figure \ref{fig:motivating_exp}, which is a positive review of an adult crime drama. In a standard cosine similarity lookup, this prompt returns $x_1$ as the nearest neighbor. 
While $x_1$ is also a review for the same type of drama—sharing salient keywords like ``crime'' that define their primary topic—it contains negative comments. This subtle semantic difference leads to a divergent LLM response for $x_1$ compared to $x$, resulting in a cache miss despite their high topical similarity.
\begin{figure}
    \centering
    \includegraphics[width=\linewidth]{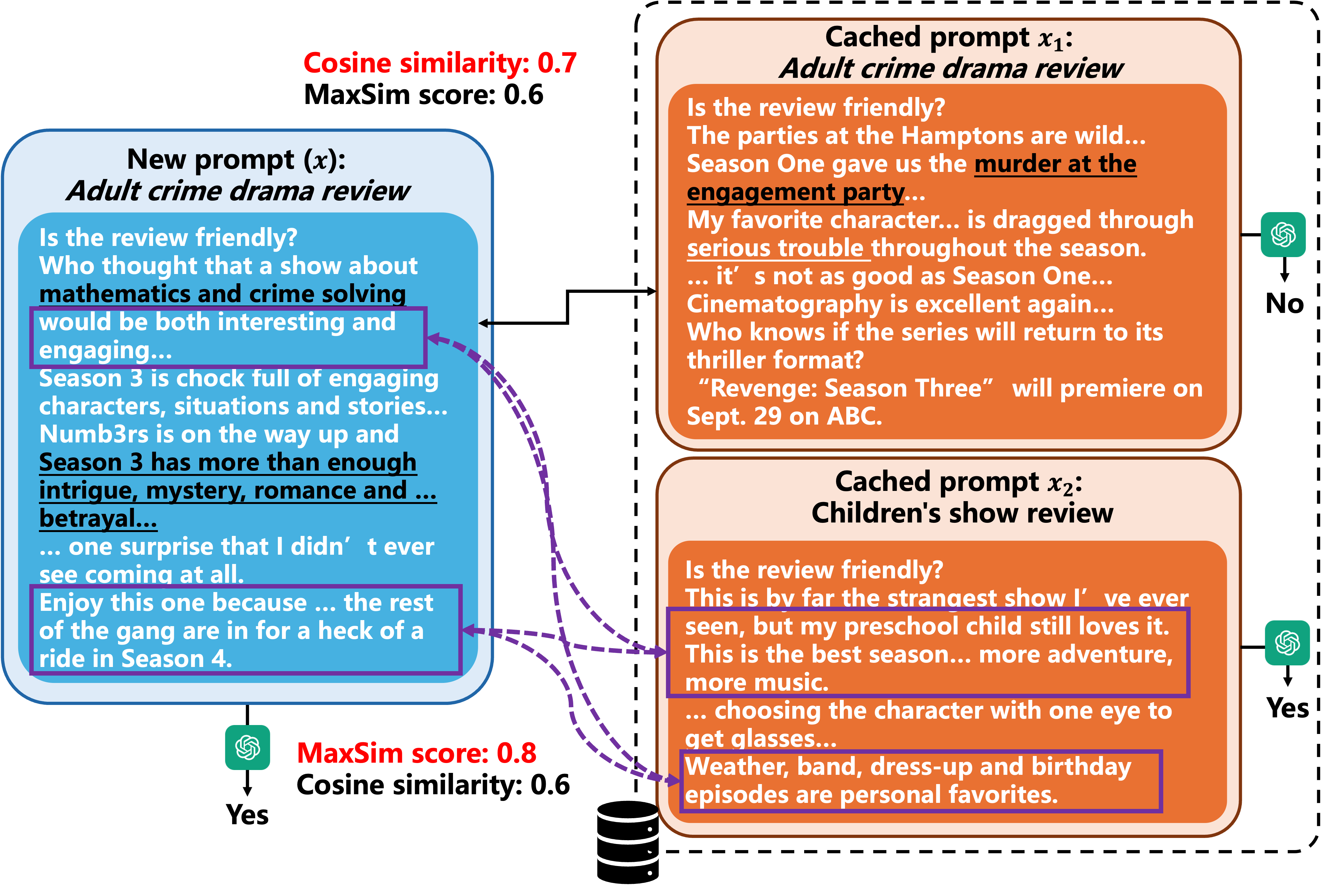}
    \caption{A motivating example from the SemCacheClassification dataset \cite{schroeder2025vcache}. 
    Using a single embedding per prompt, a new query $x$ about whether an adult crime drama review is friendly finds its nearest neighbor ($x_1$) via cosine similarity. However, their LLM responses differ, thus causing a cache miss.       In contrast, \ours\ learns to segment prompts, particularly  extracting the sentences with positive comments as individual segments (highlighted with purple boxes) for $x$ and its real nearest neighbor, $x_2$. By embedding these segments, the segmentation-aware MaxSim score of $x_2$ to $x$ exceeds that of other cached prompts, thus allowing the reuse of the cached response of $x_2$ for a cache hit.
    }
    \label{fig:motivating_exp}
\end{figure}

To address this issue, we propose to employ Multi-Vector Retrieval (MVR) to facilitate the retrieval of the real nearest neighbor for a new prompt. Originally proposed for information retrieval by ColBERT \cite{khattab2020colbert}, MVR decomposes both queries and documents into fine-grained pieces and encodes each piece into one embedding, thus producing multi-vector representations for queries and documents. The query-document similarity is then computed using the {\em MaxSim score}: for each query vector, its maximum similarity with any document vector is identified, and these scores are aggregated. As revealed by ColBert \cite{khattab2020colbert} and its follow-up works, MVR provides a nuanced matching mechanism that excels at capturing partial matches, thereby enhancing retrieval accuracy. However, its performance hinges critically on the strategy used to segment text since the multi-vector representations are produced by embedding each of these segments. While ColBERT's default approach uses single-token embeddings, recent work \cite{liupoqd} has shown this to be suboptimal, demonstrating that optimized segmentation is key to unlocking MVR's full potential. 


We therefore propose a novel prompt segmentation method, \ours, for semantic cache, which aims to {\em maximize the cache hit rate while strictly adhering to user-specified error guarantees}. \ours\ begins with the development of {\bf a lightweight segmentation model} designed with two essential properties: (1) {\em minimal model capacity to ensure efficiency} so that the decision of whether to exploit the cached prompts is made instantly online, and (2) the ability to produce {\em a variable number of segments per prompt}, which is fundamental to the flexible nature of MVR where the vector count per sample varies. However, two critical challenges arise for effective training over this model.

First, the {\em cache hit rate is a discrete and non-differentiable metric}, which thus cannot serve as the training objective for the segmentation model. Instead, we propose a surrogate objective, which aims to align the segmentation-aware MaxSim score of a new prompt to its nearest neighbor in the cache with identical LLM responses. As a consequence, for the example in Figure \ref{fig:motivating_exp}, the MaxSim score between $x$ and its real nearest neighbor $x_2$ could become higher than others. A {\bf rigorous theoretical analysis} reveals that optimizing this objective guarantees maximizing the vCache cache hit rate without violating the correctness constraints. 

In addition, due to the discrete nature of segmentation decisions, the output space of the segmentation model is large and complex, which is a {\em combinatorial space of all possible segmentation points on each prompt}. Hence, optimizing the above objective remains {\em non-differentiable}. To overcome this, we frame the problem of training the segmentation model as a {\bf reinforcement learning for combinatorial optimization (RL4CO)} \cite{berto2025rl4co} task, where we design a reward function directly informed by our theoretical objective, enabling effective gradient-based learning of the segmentation model.

We further perform extensive experiments on a variety of benchmark datasets, covering diverse tasks such as classification, search and open-ended generation. The results demonstrate that in comparison to the state-of-the-art, \ours\ can consistently increase the cache hit rate by up to 25\% while respecting the correctness guarantees. 
Our contributions are summarized as follows:
\begin{itemize}[noitemsep, topsep=0pt, left=0pt]
    \item \ours, a method that learns an {\bf efficient, lightweight segmentation model} to segment prompts to facilitate multi-vector retrieval (MVR) to identify the real nearest neighbors in the cache for a new prompt.
    \item A {\bf rigorous theoretical analysis} that formalizes a training objective, proving that optimizing it is equivalent to maximizing cache hit rate under correctness guarantees.
    \item A {\bf reinforcement learning (RL) solution} to train this segmentation model
    by framing it as a {\bf combinatorial optimization problem}. 
    \item {\bf Extensive experiments} on a variety of benchmark datasets covering diverse tasks demonstrate the effectiveness of \ours.  
\end{itemize}
\section{Preliminaries}
\label{sec:prelim:setup}
\subsection{Semantic Caching}
Let ${x_1,\ldots,x_n}$ be prompts sequentially inserted into a semantic cache. For each $x_i$, mainstream solutions like vCache \cite{schroeder2025vcache} store three items: its embedding $\mathcal{E}(x_i)\in\mathbb{R}^d$, its LLM-generated response $r(x_i)=\mathrm{LLM}(x_i)$, and associated metadata $\mathcal{O}(x_i)$. $\mathcal{O}(x_i)$ is a sequence of pairs, where each pair links $x_i$ to a later prompt $x_j$ that identified $x_i$ as its nearest neighbor. Formally:
\begin{small}
    \begin{align}
        \mathcal{O}(x_i) = \{(s(x_j),c(x_j))|nn(x_j)=x_i\}_{j=i+1}^n,
    \end{align}
\end{small}
in which $nn(x_j)$ denotes the nearest neighbor of $x_j$ among $\{x_1,x_2,\dots,x_{j-1}\}$, 
$s(x_i)$ is defined as the similarity score between $x_j$ and $nn(x_j)$, and a Boolean label $c(x_j)$ indicating whether the true response $r(x_j)$ matches the cached response $r
(nn(x_j))$. 

The goal of performing semantic caching is to reduce LLM inference costs by reusing cached responses for new, sufficiently similar prompts. The core of this approach is to {\em develop a caching policy} that decides, for each new query, whether to {\em exploit} (reuse a cached response) or {\em explore} (invoke the LLM). 


\subsection{Caching policy in vCache}\label{sec: prelimin_vcache}

To achieve the above goal, vCache \cite{schroeder2025vcache} proposes a {\em stochastic caching policy}
based on the similarity of one new prompt, $x$, to its nearest neighbor in the cache, $s(x)$. This policy
aims to maximize the cache hit rate while ensuring that the probability of returning the correct response is provably high enough.
Due to the dependency of this policy on $s(x)$, vCache begins by modeling the conditional {\em correctness probability} that the LLM response of a new prompt, $x$, matches that of its nearest neighbor given $s(x)$ using the following sigmoid formula:
\begin{small}
\begin{equation}
\begin{aligned}
\Pr\!\big(c(x)=1 \mid s(x)\big) &= \mathcal{L}(s(x);t,\gamma) = \frac{1}{1+e^{-\gamma(s-t)}}.
\end{aligned}
\label{eq:logistic}
\end{equation}    
\end{small}
In the above formula, $t \in [0,1]$ and $\gamma > 0$. These two parameters are {\em dependent on specific cached prompts}, and estimated via maximum likelihood estimation (MLE) using all the metadata of the nearest neighbor of $x$, 
$\mathcal{O}(nn(x))$:
\begin{small}
\begin{equation}
\begin{aligned}
\hspace{-2mm} (\hat t,\hat\gamma) =\arg\min_{t,\gamma}\;
\sum_{(s_i,c_i)\in \mathcal{O}({nn}(x))}\;
\ell_{\mathrm{BCE}}(\mathcal{L}(s_i;t,\gamma), c_i),
\end{aligned}
\label{eq:mle_logistic}
\end{equation}    
\end{small}
in which, $\ell_{\mathrm{BCE}}$ represents the binary cross-entropy loss. These two estimated parameters are then incorporated into Equation \eqref{eq:logistic} to estimate the correctness probability for each new prompt $x$. This probability is then utilized to dynamically determine the exploration probability, $\tau$, which governs when to bypass the cache and query the LLM directly.

Specifically, to compensate for potential overfitting of the parameters $(\hat{t}, \hat{\gamma})$ in Equation \eqref{eq:mle_logistic}, which could lead to {\em an overestimate of $\Pr\big(c(x)=1 \mid s(x)\big)$},
$\tau$ is conservatively estimated as:
\begin{small}
\begin{equation}
\tau = \min_{t',\gamma'}\frac{(1-\delta)-\alpha}{1-\alpha}, \quad \text{with } \alpha = (1-\epsilon)\,\mathcal{L}\!\big(s;t',\gamma'\big),
\label{eq:Gtau}
\end{equation}    
\end{small}
where the minimization is over $(t',\gamma')$ within a $(1-\epsilon)$ confidence region of $(\hat{t}, \hat{\gamma})$. This formulation guarantees, under mild assumptions, {\bf the probability of a correct response (via cache or LLM) is at least $1 - \delta$ for any user-defined error rate $\delta$} (see \cite{schroeder2025vcache} for derivation).

\subsection{Multi-Vector Retrieval}
As noted in Section \ref{sec: intro}, cosine similarity between full-prompt embeddings may not reflect true semantic similarity, leading to suboptimal cache hit rates.

To address this issue, we employ multi-vector retrieval (MVR)~\cite{khattab2020colbert}, which represents each text sequence with a set of embedding vectors rather than a single vector. Specifically, let a cached prompt \(x_j\) and a new prompt \(x\) be represented by sequences of embeddings  
\(\{\mathcal{E}(x^{(1)}),\mathcal{E}(x^{(2)}),\dots\}\) and \(\{\mathcal{E}(x_j^{(1)}),\mathcal{E}(x_j^{(2)}),\dots\}\).  
Their similarity is then measured by the following {\em MaxSim} score:
\begin{small}
\begin{align}\label{eq: maxsim_score}
    \text{MaxSim}(x,x_j) = \sum\nolimits_{t} \max\nolimits_{s} \text{sim}(\mathcal{E}(x^{(t)}), \mathcal{E}(x^{(s)}_j)).
\end{align}    
\end{small}
Intuitively speaking, $\text{MaxSim}(x,x_j)$ first selects the most similar vector, $\mathcal{E}(x^{(s)}_j)$, from $x_j$ for each $\mathcal{E}(x^{(t)})$, and then aggregates this similarity across all $t$. 

\begin{example}\label{ex: maxsim_score}
Suppose one cached prompt $x_1$ is represented by three vectors, $[x_1^{(1)},x_1^{(2)}, x_1^{(3)}]$ while the new prompt $x$ is associated with two vectors, $[x^{(1)},x^{(2)}]$, we compute the pairwise similarity scores between each vector from $x_1$ and each vector from $x$ as follows:
\begin{small}
    \begin{align*}
    \begin{array}{c|ccc}
     & x_1^{(1)} & x_2^{(1)} & x_3^{(1)} \\
    \hline
    x^{(1)} & 0.01 & \underline{\textcolor{red}{0.83}} & 0.02 \\
    x^{(2)} & \underline{0.05} & \textcolor{red}{0.80} & \underline{0.01}
    \end{array}        
    \end{align*}
\end{small}
For each $x^{(t)}$ from $x$, we identify its most similar vector among $[x_1^{(1)}, x_1^{(2)}, x_1^{(3)}]$ and highlight its similarity to $x^{(t)}$ using red color, which is the same as the highest similarity score at each row. These scores are then plugged into Equation \eqref{eq: maxsim_score}, yielding the final MaxSim score $\text{MaxSim}(x,x_1)=0.83+0.80=1.63$
\end{example}

%

In Multi-Vector Retrieval (MVR), the effectiveness of MaxSim scores depends critically on {\em the strategy used to decompose text into segments}. While existing methods like ColBERT \cite{khattab2020colbert} generate embeddings at the token level, this overly fine-grained approach can impair retrieval performance. To address this, \cite{liupoqd} introduces a method that dynamically decomposes text into coarser, semantically meaningful segments. This approach uniquely leverages retrieval performance as an optimization signal to iteratively refine the decomposition process.


\paragraph{Problem definition} 
Building on this insight, we adapt the strategy of \cite{liupoqd} to semantic caching. Our method {\em automatically decomposes both cached and new prompts}, then uses a {\em segmentation-aware MaxSim score} to more reliably retrieve the real nearest neighbors. The ultimate goal is to {\em maximize the cache hit rate while maintaining the correctness guarantees}.

\section{\ours}


\subsection{General inference process}
\label{sec:method:inference}
The overall inference process of \ours\ is visualized in Figure \ref{fig:inference_overview}, which is described in detail below. 

\begin{figure*}
    \centering
    \includegraphics[width=\linewidth]{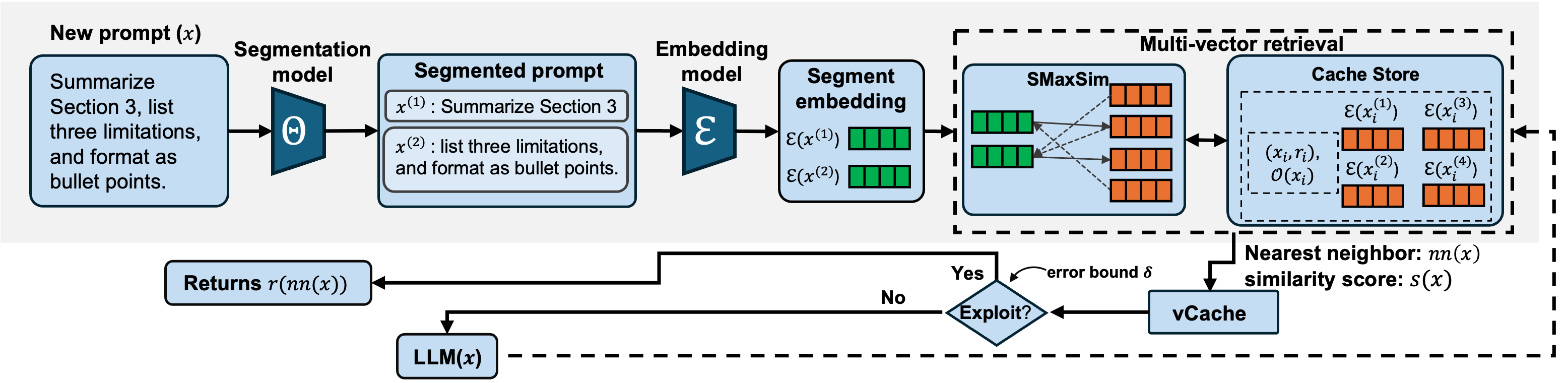}
    \caption{
    Overview of the inference process for \ours. Given a new prompt $x$, the segmentation model $\Theta$ first identifies candidate split positions (e.g., at punctuation marks) to yield multiple text segments. Each segment is encoded by the embedding model $\mathcal{E}$ to produce a multi-vector representation for $x$. This representation is compared to cached entries using the segmentation-aware MaxSim score (SMaxSim) to retrieve the nearest neighbor $nn(x)$ and its similarity score. Finally, the vCache module uses $nn(x)$ and its SMaxSim score to determine whether to utilize the cached prompt.
    }
    \label{fig:inference_overview}
\end{figure*}

\paragraph{Prompt segmentation and embedding}
Given an arbitrary text sequence $x$—including both incoming new prompts and cached prompts—\ours\ first identifies a set of candidate split positions, e.g., punctuation boundaries. These split positions are denoted by $\mathcal{P}_x=\{p_1,p_2,\dots,p_i,\dots\}$.
A segmentation model $\Theta$ then selects a subset of these positions to split the sequence. Formally, the model outputs an ordered list of split indices:
\begin{small}
    \begin{align*}
        \Theta(x)= \overrightarrow{p} =  [p_{i_1},\ldots,p_{i_{m-1}}], 1 \le i_1 < \cdots < i_{m-1} < |\mathcal{P}_x|,
    \end{align*}
\end{small}
Using these indices, $x$ is partitioned into $m$ contiguous sub-sequences, 
$(x^{(1)},\ldots,x^{(m)})$, such that each $x^{(t)}$ consists of tokens from position $(p_{i_t})+1$ to position $(p_{i_{t+1}})$ in $x$, with the conventions $p_{i_0}=0$ and $p_{i_m}=|x|$.

Each $x^{(t)}$ is then embedded using a shared encoder $\mathcal{E}(\cdot)$, producing $m$ vectors, $(\mathcal{E}(x^{(1)}),\mathcal{E}(x^{(2)}),\dots,\mathcal{E}(x^{(m)}))$ as the multi-vector representation for $x$. The overall segmentation and embedding process can be summarized as follows:
\begin{small}
    \begin{align}\label{eq: seg_emb}
    \begin{split}
        \mathbf{SEG}(x;\Theta(x)) &= (x^{(1)},\ldots,x^{(m)})\\
        \mathbf{Emb}((x^{(1)},\ldots,x^{(m)}))&=(\mathcal{E}(x^{(1)}),\dots,\mathcal{E}(x^{(m)}))
    \end{split}
    \end{align}
\end{small}



\begin{example}
For instance, $\mathcal{P}_x$ could be defined as an ordered sequence of positions where punctuation marks appear in $x$. For the following prompt $x$, $\mathcal{P}_x=$[6, 14, 28] since the comma occurs at the token position 6 and 14, and a period occurs at the token position 28. The period at the final position may also be treated as a special ``$<$stop$>$'' token. 
\begin{small}
\begin{align*}
    x=\parbox[t]{\columnwidth}{\raggedright
\texttt{``Summarize Section 3, list three limitations, and format as bullet points.''}}
\end{align*}
\end{small}
Given $\mathcal{P}(x)$, a segmentation model $\Theta$ may output a subset of these positions to indicate where the prompt should be divided.
For instance, $\Theta$ might output [14], indicating that the second comma at position 14 is used to split $x$ into two sub-subsequences:
\begin{small}
\[
\begin{aligned}
x^{(1)} &= \parbox[t]{\columnwidth}{\raggedright\texttt{``Summarize Section 3, list three limitations,''}}\\
x^{(2)} &= \parbox[t]{\columnwidth}{\raggedright\texttt{``and format as bullet points.''}}
\end{aligned} 
\]
\end{small}    
\end{example}
Note that $\mathcal{P}_x$ is prompt-dependent, which has {\em variable sizes} between different prompts. Similarly, the output of $\Theta$ is a variable-length subset of $\mathcal{P}_x$.


\paragraph{Segmentation-aware MaxSim score}
The above segmentation-and-embedding procedure can generate a multi-vector representation for each cached prompt $x_i$ and each new prompt $x$:
\begin{small}
    \begin{align*}
    &(\mathcal{E}(x_i^{(1)}),\mathcal{E}(x_i^{(2)}),\cdots) = \mathbf{Emb}(\mathbf{SEG}(x_i;\Theta(x_i)))\\
    &(\mathcal{E}(x^{(1)}),\mathcal{E}(x^{(2)}),\cdots) = \mathbf{Emb}(\mathbf{SEG}(x;\Theta(x)))
    \end{align*}
\end{small}
These multi-vector representations can then be incorporated into Equation \eqref{eq: maxsim_score} to compute the MaxSim score between each $x_i$ and $x$. However, the MaxSim score is inherently {\em asymmetric}, which poses a critical issue in semantic caching. Returning to Example \ref{ex: maxsim_score}, $\text{MaxSim}(x_1,x)$ is computed by aggregating the highest score at each column, resulting in a score of 0.89, which is only roughly half of $\text{MaxSim}(x,x_1)$. In semantic caching, a cache hit should reflect {\em mutual semantic similarity} between the new and cached prompts. A high MaxSim score, which is unidirectional, however, can only guarantee partial matching between prompts-specifically, that each segment of the new prompt is similar to some segment of the cached prompt, but not necessarily vice versa. Hence, we propose to average the two normalized unidirectional MaxSim scores as the {\bf Segmentation-aware MaxSim score} for semantic caching:
\begin{small}
\begin{align}\label{eq: smaxsim_score}
\begin{split}
    &\text{SMaxSim}_{\Theta}(x_i,x_j) \\
    &:= 0.5\times \big[\frac{1}{|x_i|}\text{MaxSim}(x_i,x_j) + \frac{1}{|x_j|}\text{MaxSim}(x_j,x_i)\big]
\end{split}
\end{align}    
\end{small}
where $|x_i|$ and $|x_j|$ denote the number of segments in each prompt, and normalization ensures scale invariance across prompts of different lengths. It is parameterized by $\Theta$ since it relies on the segmentation produced by the model $\Theta$. This symmetric measure better captures bidirectional semantic relevance, making it more suitable for cache hit determination. 
Using $\text{SMaxSim}_{\Theta}$, the nearest neighbor for a new prompt $x$ is identified as 
\begin{small}
    \begin{align*}
        s_{\Theta}(x) &= \text{SMaxSim}_{\Theta}(x,nn_{\Theta}(x)) \\
        nn_{\Theta}(x) &= \text{argmax}_{x_j}\text{SMaxSim}_{\Theta}(x,x_j).
    \end{align*}
\end{small}
In the above formula, $s_{\Theta}(x)$ and $nn_{\Theta}(x)$ are parameterized by $\Theta$ due to its reliance on the 
segmentation model, $\Theta$. During the inference phase, $nn_{\Theta}(x)$ can be quickly retrieved using indexes constructed for multi-vector retrieval, e.g., PLAID \cite{santhanam2022plaid}. The corresponding similarity score, $s_{\Theta}(x)$, is then passed to the vCache module described in Section~\ref{sec: prelimin_vcache} (Equations \eqref{eq:logistic}--\eqref{eq:Gtau}) to determine the exploration probability $\tau$ for prompt $x$. Note that according to Equation \eqref{eq:mle_logistic}, $\hat{t}$ and $\hat{\gamma}$ also implicitly depend on $\Theta$, which are thus dynamically updated during the training process as $\Theta$ evolves (see Section \ref{sec:method:train}). 




\subsection{Segmentation model design}
\label{sec:method:seg}

\paragraph{Lightweight and Variable-Size Segmentation Model.}
The segmentation model $\Theta$ is a critical component for ensuring segmentation quality and the quality of subsequent multi-vector representations. Ideally, $\Theta$ should possess two key properties. First, it must be sufficiently {\em lightweight}—ideally, much smaller than the underlying LLM—to enable real-time segmentation of new prompts during online inference. If $\Theta$ were comparable in size to the LLM, processing each prompt would incur near-LLM-level computational overhead, negating the acceleration benefits of caching. Second, $\Theta$ must accommodate {\em variable-length inputs and outputs}, as both the prompt and the number of segments are dynamic.

\begin{figure}
    \centering
    \includegraphics[width=\linewidth]{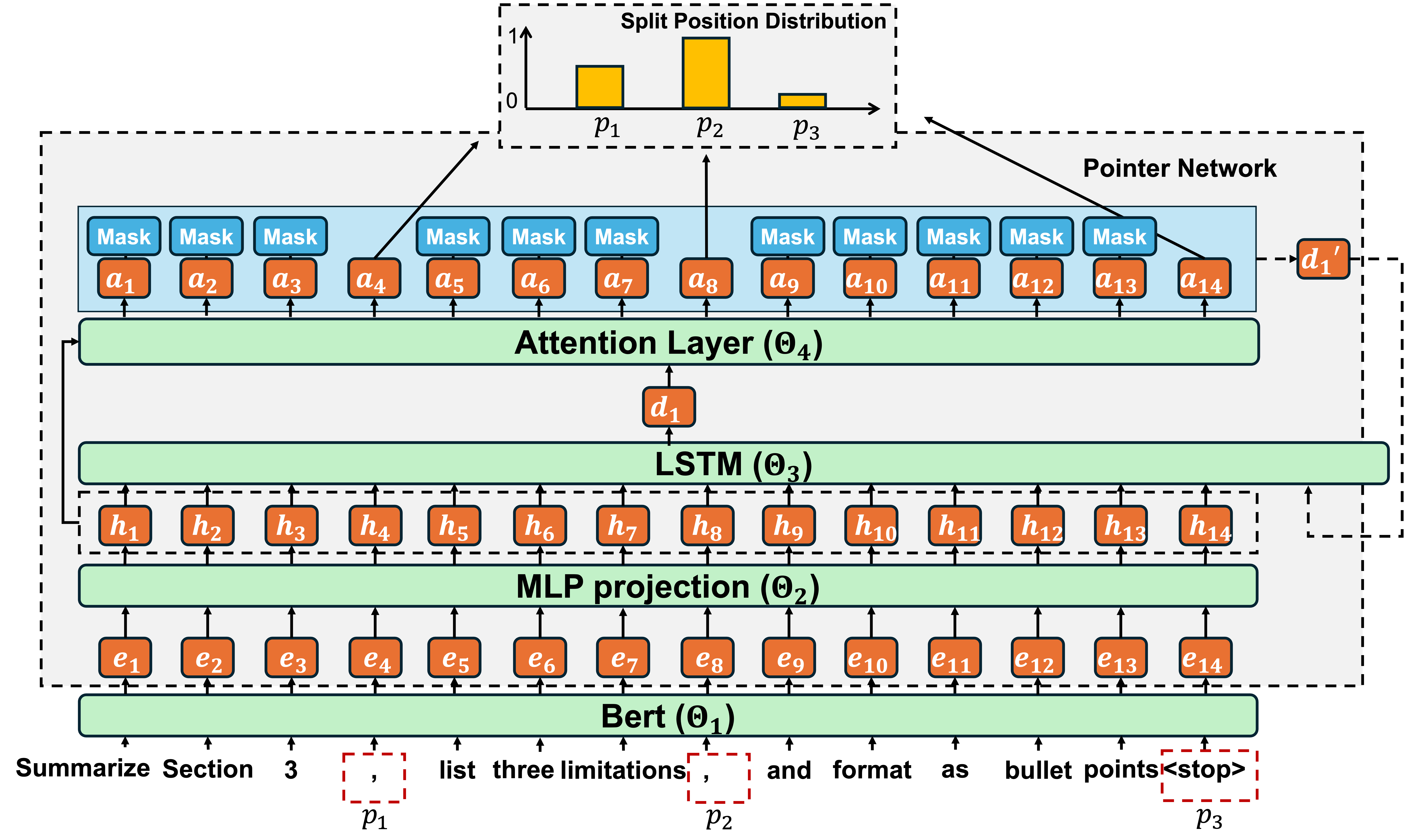}
    \caption{Model architecture for prompt segmentation. This segmentation model encodes input prompt tokens using a BERT encoder ($\Theta_1$). A projection MLP ($\Theta_2$) transforms the token embeddings, which are then processed by a decoder LSTM ($\Theta_3$). Finally, an attention layer ($\Theta_4$) computes the probability distribution over candidate split positions to predict segment boundaries.}
    \label{fig:seg_model}
\end{figure}

To fulfill these requirements, we design a lightweight model based on the pointer network architecture \cite{vinyals2015pointer}, which naturally handles sequences of variable sizes (see Figure \ref{fig:seg_model}). Given an input prompt $x$ of length $L$, our model first encodes $x$ using a BERT encoder $\Theta_1$ to obtain token embeddings $(e_1, \ldots, e_L)$. A single-layer MLP encoder $\Theta_2$ then transforms each $e_i$ into a pointer state $h_i$ for $i=1,\dots,L$. These states are aggregated into an initial hidden state $d_1$ via a single-layer LSTM $\Theta_3$:

\begin{small}
    \begin{align*}
        d_1 = \text{LSTM}_{\Theta_3}([h_1,h_2,\dots,h_L]).
    \end{align*}
\end{small}

The resulting $d_1$ represents the entire prompt context. Following the pointer network mechanism, an attention layer $\Theta_4$ uses $d_1$ and the pointer states ${h_i}$ to compute a probability distribution over all potential split positions. However, unlike the standard pointer network which considers all input indices, our candidate split positions are a restricted subset $\mathcal{P}_x$ of $[1, \dots, L]$. We therefore mask invalid positions by assigning them zero probability. The attention mechanism for this step is formulated as:

\begin{small}
    \begin{align}
\begin{split}\label{eq: prob_modeling}
    u_{1j} &= v^\top \tanh(W_1h_j + W_2d_1), j \in [1,2,\dots,L] \\
    a_{1j} &= \text{softmax}(u_{1j})\cdot \mathbf{I}(j \in \mathcal{P}_x), d_1' =\sum_{j} a_{1j} h_j\\    
\end{split}
\end{align}
\end{small}

where $v$, $W_1$, and $W_2$ are learnable parameters in the attention layer. Hence, $\Theta_4=(v, W_1, W_2)$. The position $j$ with the highest probability $a_{1j}$ is selected as the first segmentation boundary.

Overall, the segmentation model consists of one Bert encoder ($\Theta_1$), one MLP layer ($\Theta_2$), one single-layer LSTM ($\Theta_3$) and one simple attention layer ($\Theta_4$). In our implementation, $\Theta$ only takes around 500-600 MB GPU memory in total, which is much smaller than the capacity of typical LLMs. 




\paragraph{Recurrent selections of the split positions}
To select subsequent boundaries, this process proceeds recurrently. The attention output $d_i'$ is fed back into the LSTM, $\Theta_3$, to update the context state for the next step:
\begin{small}
    \begin{align}
        d_2 = \text{LSTM}_{\Theta_3}([h_1,h_2,\dots,h_L, d_1']).
    \end{align}
\end{small}
We then compute a new distribution by replacing $d_1$ with $d_2$ in Equation \ref{eq: prob_modeling}. To prevent reselection, all candidate positions up to and including the previously chosen index are masked. This recurrent selection continues until the termination token ($<$stop$>$) is predicted.

\subsection{Theoretical insights for the training objectives}
\label{sec:method:theory}

Recall that our goal is to \emph{maximize cache hit rate} while preserving vCache’s \emph{correctness guarantee} through appropriate prompt segmentation. To this end, we adapt the segmentation-aware MaxSim score, $\text{SMaxSim}_{\Theta}$, by training the segmentation model to minimize the MLE loss in Equation~\eqref{eq:mle_logistic}. This aligns the similarity scores $\text{SMaxSim}_{\Theta}(x_i,x_j)$ with the binary label indicating whether $x_i$ and $x_j$ yield equivalent LLM responses.

However, as mentioned in Section \ref{sec:method:inference}, a complication arises because the optimal $t$ and $\gamma$ implicitly depend on $\Theta$, 
thus leading to a complex dependency of the final cache hit rate on $\Theta$. Nevertheless, we can formally show that under the following assumptions, optimizing the MLE loss maximizes the hit rate while maintaining the correctness guarantee.


\begin{assumption}[Conditional distribution of similarity scores]\label{assp: similarity_dist}
The similarity score $s(x)$, conditioned on the annotation label $c$, follows a normal distribution: $ \Pr(s \mid c) \sim \mathcal{N}(\mu_c, \sigma^2)$, where $\mu_c$ differs between different classes, $c$ (could be either 0 or 1). 
\end{assumption}

\begin{assumption}[Balanced class prior]\label{assp: balanced}
The class distribution is balanced: $\Pr(c=1)=\Pr(c=0)=0.5$.
\end{assumption}

\begin{theorem}\label{theorem: objectives}
Under Assumptions \ref{assp: similarity_dist} and \ref{assp: balanced}, optimizing the prompt segmentation model to minimize Equation~\eqref{eq:mle_logistic} maximizes the cache hit rate under an arbitrary user-specified error bound $\delta$.
\end{theorem}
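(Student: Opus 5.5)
We prove the statement by relating the minimal value of the MLE objective in Equation~\eqref{eq:mle_logistic} to the class separation of $\text{SMaxSim}_{\Theta}$, and then relating that separation to the hit rate of the vCache policy in Equation~\eqref{eq:Gtau}.

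\textbf{Step 1: the posterior is exactly logistic.} Under Assumptions~\ref{assp: similarity_dist} and~\ref{assp: balanced}, Bayes' rule gives
\begin{equation*}
\Pr(c=1\mid s)=\frac{1}{1+\exp\!\big(-\tfrac{\mu_1-\mu_0}{\sigma^2}\,(s-\tfrac{\mu_0+\mu_1}{2})\big)} .
\end{equation*}
This has exactly the form $\mathcal{L}(s;t,\gamma)$ of Equation~\eqref{eq:logistic}, with
\begin{equation*}
t^*=\frac{\mu_0+\mu_1}{2},\qquad \gamma^*=\frac{\Delta}{\sigma^2},\qquad \Delta:=\mu_1-\mu_0 .
\end{equation*}
So the model class is well specified. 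In the population limit, the MLE of Equation~\eqref{eq:mle_logistic} recovers $(t^*,\gamma^*)$ for whatever segmentation $\Theta$ is in use. The minimal population loss then equals the conditional entropy $H(c\mid s_\Theta)$.

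\textbf{Step 2: the loss is monotone in the separation.} Set $d:=\Delta/\sigma$. I would compute $H(c\mid s_\Theta)$ as an expectation under the Gaussian mixture. After centring and scaling, this expectation depends only on $d$. Writing the mixture as $s=\pm d/2+z$ with $z\sim\mathcal{N}(0,1)$, I would show that the entropy $h(d)$ is strictly decreasing in $d$. One route is to differentiate under the integral. An alternative is a data-processing or Blackwell argument: a larger $d$ is a less noisy experiment, so it yields lower Bayes log-loss. Either way, minimizing Equation~\eqref{eq:mle_logistic} over $\Theta$ is equivalent to maximizing $d(\Theta)$.

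\textbf{Step 3: the hit rate is monotone in the separation.} Ignore for the moment the confidence-region correction; the general case is handled below. The policy exploits with probability $1-\tau(s)$, where $\tau$ is a nonincreasing function of $\mathcal{L}(s;t^*,\gamma^*)$. For any fixed $\delta$, the correctness guarantee holds because vCache's derivation only requires the logistic model to be correct, and Step 1 shows that it is. The expected hit rate is
\begin{equation*}
\mathbb{E}_s\big[1-\tau(s)\big]=\tfrac12\,\mathbb{E}_{s\mid c=1}\big[1-\tau\big]+\tfrac12\,\mathbb{E}_{s\mid c=0}\big[1-\tau\big].
\end{equation*}
Written in the standardized variable $u=(s-t^*)/\sigma$, the exploit indicator becomes a function of $d\cdot u$ and the class means sit at $u=\pm d/2$. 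I would show this expectation is increasing in $d$, which involves two competing effects:
\begin{itemize}
\item the positive class moves further into the high-confidence region, where $\tau=0$;
\item negatives move toward $\tau=1$, so they stop being served, and this costs no hit rate they were entitled to beyond what $\delta$ allows.
\end{itemize}
Combining Steps~2 and~3 shows that the MLE objective and the hit rate share the same maximizer in $\Theta$.

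\textbf{Main obstacle.} The hard step is Step~3. Making this monotonicity precise requires care about how $\tau$ trades off against $\delta$. As $d$ grows, $\mathcal{L}$ sharpens: the posterior mass above the threshold $1-\delta$ increases, but negatives far below stop contributing hits. I would try to show that the exploit region $\{s:\mathcal{L}(s)\ge 1-\delta\}$, taken under the mixture, has probability mass that is nondecreasing in $d$ whenever $\delta<1/2$. In the limit $d\to\infty$ this mass tends to $1/2$, the fraction of positives, which is the best possible. The $\epsilon$-confidence-region correction in Equation~\eqref{eq:Gtau} can then be handled by noting that it shrinks as the sample size grows. I would therefore state the theorem in the large-sample regime.
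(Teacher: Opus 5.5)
Your Steps 1 and 2 match the paper's Lemma~\ref{lem:posterior} and Theorem~\ref{thm:separation}. The paper holds $\sigma$ fixed and works with $\Delta$; your conditional-entropy formulation in $d=\Delta/\sigma$, with the Blackwell argument, is if anything cleaner. The gap is Step 3: it is left unproven, and the reduction you propose targets the wrong quantity. From Equation~\eqref{eq:Gtau}, vCache exploits with probability $1-\tau=\min\{1,\delta/(1-\alpha)\}$, which is strictly positive for every $s$. In fact $\tau$ never exceeds $1-\delta$, so negatives are not driven to $\tau=1$; they are still served with probability at least $\delta$. Ignoring $\epsilon$, the hit rate is therefore $\mathbb{E}[\min\{1,\delta/(1-\mathcal{L}(s))\}]$, not the mixture mass of $\{s:\mathcal{L}(s)\ge 1-\delta\}$. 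Monotonicity of that mass is easy: with $k=\log\frac{1-\delta}{\delta}$, its $d$-derivative is a positive multiple of $\tfrac12(1-e^{-k})+\tfrac{k}{d^2}(1+e^{-k})$. But it does not yield the theorem, and its limit $1/2$ is not the limit of the hit rate.

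With the correct expression, the obstacle is substantive rather than technical. In your variable $u$, the exploit probability is $\min\{1,\delta(1+e^{du})\}$. As $d\to0$ the posterior is identically $1/2$, so the hit rate tends to $\min\{1,2\delta\}$. As $d\to\infty$, positives are always served and negatives with probability $\delta$, so the hit rate tends to $\tfrac12(1+\delta)$. For $\delta>1/3$ the former is larger, so the hit rate is not monotone in $d$. No argument of your type can therefore give the statement for arbitrary $\delta$. You must restrict $\delta$ and then prove that $\mathbb{E}[\min\{1,\delta(1+e^{du})\}]$ increases in $d$ in that range; that proof is the missing idea.

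The Blackwell argument of Step 2 does not transfer, because $\eta\mapsto\min\{1,\delta/(1-\eta)\}$ is not convex: it has a concave kink at $\eta=1-\delta$. The paper sidesteps this by a different route. A midpoint-stability lemma shows $t$ stays fixed under the gradient flow while $\gamma$ grows. It then argues per sample that $\mathcal{L}$ rises and $\tau$ falls on positive samples. It never accounts for the loss of exploitation on negatives, which is exactly what your mixture formulation exposes.

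Finally, the large-sample limit collapses the confidence region in Equation~\eqref{eq:Gtau} but not the fixed factor $(1-\epsilon)$ in $\alpha$. You need $\epsilon\to0$ as well, or you must carry that factor through the argument.
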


In practice, the balanced-class assumption (\ref{assp: balanced}) often does not hold. We therefore employ a class rebalanced version of Equation~\eqref{eq:mle_logistic}. The following lemma extends our guarantee to this setting. 

\begin{lemma}\label{lemma: imbalanced}
Under Assumption \ref{assp: similarity_dist} alone, minimizing the class-rebalanced version of Equation~\eqref{eq:mle_logistic} increases the cache hit rate under the same error bound $\delta$.
\end{lemma}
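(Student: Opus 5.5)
Under Assumption \ref{assp: similarity_dist} alone, the class-conditional densities of $s$ are Gaussian with common variance $\sigma^2$ and means $\mu_1>\mu_0$. With arbitrary prior $\pi=\Pr(c=1)$, Bayes' rule still gives an exact logistic posterior
\begin{small}
\begin{align*}
\Pr(c=1\mid s)=\frac{1}{1+e^{-\gamma^*(s-t^*)-\log\frac{\pi}{1-\pi}}},\qquad \gamma^*=\frac{\mu_1-\mu_0}{\sigma^2},\quad t^*=\frac{\mu_0+\mu_1}{2}.
\end{align*}
\end{small}
The plan is to show that the class-rebalanced loss removes the prior term, so that Theorem \ref{theorem: objectives} applies to a reweighted balanced problem. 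I then transfer its conclusion back to the original imbalanced distribution by a monotonicity argument, rather than an exact optimality argument.

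\textbf{Step 1 (rebalancing recovers the balanced problem).} Weighting class-$1$ terms by $1/(2\pi)$ and class-$0$ terms by $1/(2(1-\pi))$ makes the population rebalanced BCE equal to the BCE under a reweighted distribution $\tilde{\Pr}$. Under $\tilde{\Pr}$ the class prior is $1/2$ and the conditionals $\Pr(s\mid c)$ are unchanged. Hence $\tilde{\Pr}$ satisfies Assumptions \ref{assp: similarity_dist} and \ref{assp: balanced}. By proper-scoring-rule consistency of BCE, its minimizer over $(t,\gamma)$ is $(t^*,\gamma^*)$. The minimal loss is then the conditional entropy $H_{\tilde{\Pr}}(c\mid s)$, which is a strictly decreasing function of the separation $d=(\mu_1-\mu_0)/\sigma$. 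So minimizing the rebalanced loss over $\Theta$ amounts to maximizing $d(\Theta)$, exactly as in the balanced case of Theorem \ref{theorem: objectives}.

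\textbf{Step 2 (separation increases hit rate under the true prior).} vCache exploits with probability $1-\tau(s)$, and $\tau$ is decreasing in the estimated correctness $\alpha=(1-\epsilon)\mathcal{L}(s;t',\gamma')$. The vCache guarantee of Section \ref{sec: prelimin_vcache} gives correctness at least $1-\delta$ provided $\mathcal{L}$ is a valid (conservative) model of $\Pr(c=1\mid s)$.
\begin{itemize}[noitemsep, topsep=0pt, left=0pt]
    \item \textbf{Error bound unchanged.} The logistic posterior under the true prior differs from the rebalanced one only by the intercept shift $\log\frac{\pi}{1-\pi}$. vCache's per-entry MLE on $\mathcal{O}(nn(x))$ still fits this true posterior, so the $1-\delta$ guarantee is preserved irrespective of the rebalanced training of $\Theta$.
    \item \textbf{Hit rate increases.} I write the hit rate as $\mathbb{E}_s[1-\tau(\Pr(c=1\mid s))]$ with $s$ drawn from the mixture $\pi\mathcal{N}(\mu_1,\sigma^2)+(1-\pi)\mathcal{N}(\mu_0,\sigma^2)$. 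I then show this expectation is nondecreasing in $d$. After standardizing $s$, increasing $d$ pushes the class-$1$ mass toward larger posterior values, so $1-\tau$, which is increasing in the posterior, has larger expectation.
\end{itemize}

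\textbf{Main obstacle.} The hardest step is the monotonicity claim in Step 2. Once $\pi\neq 1/2$, the map from $d$ to the posterior distribution is no longer symmetric. Class-$0$ mass may also move, and its posterior values shrink. My plan is to express the posterior random variable $Z=\Pr(c=1\mid s)$ and note that, for Gaussian location families, $Z$ under larger $d$ is a mean-preserving spread of $Z$ under smaller $d$ (Blackwell ordering: $\mathbb{E}[Z]=\pi$ is fixed). The hit rate is then monotone provided $1-\tau$ is convex in $Z$ on the relevant range. Since $1-\tau$ is affine-then-clipped at zero, it is convex, which gives the result.

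For this reason I only claim the hit rate \emph{increases}, not that it is maximized. Rebalanced training optimizes the balanced surrogate, which need not coincide with the exact optimum for prior $\pi$. This matches the weaker conclusion of the lemma.
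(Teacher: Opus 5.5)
Your Step 1 is exactly the paper's proof. The paper reweights the two classes by $1/\pi$ and $1/(1-\pi)$, notes that the population loss becomes $\mathbb{E}_{P_1}[\log(1+e^{-\gamma(s-t)})]+\mathbb{E}_{P_0}[\log(1+e^{\gamma(s-t)})]$ (a multiple of the balanced loss), and then appeals to the argument for Theorem~\ref{theorem: objectives}. It never transfers the hit-rate conclusion back to the true prior.

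Your Step 2, which does attempt that transfer, rests on a false claim. Since $\frac{(1-\delta)-\alpha}{1-\alpha}=1-\frac{\delta}{1-\alpha}$, the exploit probability is $1-\tau=\min\{1,\delta/(1-\alpha)\}$ with $\alpha=(1-\epsilon)\mathcal{L}$. This is not affine. The term $\delta/(1-\alpha)$ is convex, but clipping $\tau$ at zero \emph{caps} $1-\tau$ at one, which is a concave operation: at $\alpha=1-\delta$ the slope drops from $1/\delta$ to $0$. The cap is reachable whenever $\epsilon<\delta$, since class-$1$ prompts with $Z$ near $1$ reach it. So $1-\tau$ is not convex on the range where $Z$ lives, and the Blackwell/mean-preserving-spread ordering of $Z$ yields no conclusion.

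Worse, the monotonicity you are trying to prove is false in the imbalanced regime. Suppose $(1-\epsilon)\pi>1-\delta$.
As $d\to 0$, $Z$ concentrates at $\pi$, so $\tau\to 0$ and the hit rate tends to $1$. The error rate is $1-\pi<\delta$, so the guarantee is met.
As $d\to\infty$, class-$1$ prompts have $Z\to 1$ and are always served from cache. Class-$0$ prompts have $Z\to 0$, hence $\tau\to 1-\delta$, and the hit rate falls to $\pi+(1-\pi)\delta<1$.
So greater separation can lower the hit rate under the true prior, and fixing the convexity step alone will not save Step 2.

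To make your route work you need an extra restriction that keeps the cap inactive, for instance $\epsilon\ge\delta$. Then $1-\tau=\delta/(1-(1-\epsilon)Z)$ is genuinely convex on $[0,1]$ and your Blackwell argument goes through (still ignoring the confidence-region minimization inside $\alpha$). Without such a restriction, what survives is only what the paper establishes: the rebalanced objective reduces to the balanced one, and the hit-rate claim is inherited from the paper's argument for Theorem~\ref{theorem: objectives}. That argument only tracks how $\mathcal{L}$ and $\tau$ move on positive samples.
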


We defer the full proofs of Theorem~\ref{theorem: objectives} and Lemma~\ref{lemma: imbalanced} to Appendix~\ref{sec: proof}. We include some empirical results in Appendix \ref{sec: additional_exp} that can justify Assumption \ref{assp: similarity_dist} on real datasets.

\subsection{Offline Training of the Segmentation Policy}
\label{sec:method:train}

Following the above theoretical analysis, the training objective is to minimize the following MLE loss over all training prompts $x_i$ and their associated nearest neighbors $x_j$:
\begin{small}
\begin{align*}
\begin{split}
    \sum_{i}\sum_{nn_{\Theta}(x_j)=x_i}
    \ell_{\mathrm{BCE}}(\mathcal{L}(\text{SMaxSim}_{\Theta}(x_i,x_j);t_i,\gamma_i), c_j)
\end{split}
\end{align*}
\end{small}
where $nn_{\Theta}(x_i)$ denotes the cached prompt most similar to $x_i$ under the segment-aware similarity measure $\text{SMaxSim}_{\Theta}$. Optimizing this objective requires adapting the segmentation model $\Theta$ to select an optimal set of split indices from the candidate set $\mathcal{P}_x$ for any prompt $x$. However, this poses two critical training challenges: 1) \textbf{Combinatorial Optimization}: the search space consists of all possible combinations of split indices in $\mathcal{P}_x$, which grows exponentially with the prompt length; 2) \textbf{Non-Differentiability}: the output of $\Theta$—a discrete set of indices—does not carry gradients, preventing the direct use of gradient-based optimization.


To overcome these challenges, we frame the problem as a {\bf Reinforcement Learning for Combinatorial Optimization (RL4CO)} task \cite{berto2025rl4co} and model $\Theta$ as a stochastic policy $\pi_{\Theta}$.

\paragraph{Problem Formulation as RL4CO}
In the RL4CO formulation, an input prompt $x$ represents the {\em state}. The {\em action} to take from this state is a candidate segmentation, defined by a sampled subset of split indices $\overrightarrow{p}$ from the candidate set $\mathcal{P}_x$. The segmentation model $\Theta$ serves as the {\em policy}, defining a probability distribution $\pi_{\Theta}(\overrightarrow{p}\mid x)$ over all possible segmentations (actions) conditioned on the input prompt.

\paragraph{Reward Design and Optimization Objective}
For each training step, we randomly sample a prompt $x_i$ and consider all prompts $x_j$ where $nn_{\Theta}(x_j)=x_i$. We then sample segmentations: $\overrightarrow{p_{x_i}} \sim \pi_{\Theta}(p\mid x_i)$ and $\overrightarrow{p_{x_j}} \sim \pi_{\Theta}(p\mid x_j)$ for $x_i$ and each $x_j$. Using these segmentations, we compute the segment-aware similarity
$\text{SMaxSim}_{\Theta}(x_i,x_j)$ and the corresponding Binary Cross-Entropy (BCE) loss. The reward is the negative sum of these losses:
\begin{small}
    \begin{align*}
    \begin{split}
    & \mathrm{Reward} =\sum_{nn_{\Theta}(x_j)=x_i} 
    -\ell_{\mathrm{BCE}}(\mathcal{L}(\text{SMaxSim}_{\Theta}(x_i,x_j);t_i,\gamma_i), c_j)
    \end{split}
    \end{align*}
\end{small}
We optimize $\Theta$ with \emph{REINFORCE} \cite{williams1992simple} by maximizing the following expected reward:
\begin{small}
    \begin{align}\label{eq:pg_obj_rewrite}
        \max_{\Theta}\;
\mathbb{E}_{\overrightarrow{p_{x_i}}\sim\pi_{\Theta}(\cdot\mid x_i),\overrightarrow{p_{x_j}}\sim\pi_{\Theta}(\cdot\mid x_j)\text{, for all } nn(x_j)=x_i}
\big[ \text{Reward} \big].
    \end{align}
\end{small}
The expectation is approximated via Monte Carlo sampling \cite{sutton1998reinforcement} from the policy $\pi_{\Theta}$.

\paragraph{Jointly Optimizing $t_i$ and $\tau_i$}
As discussed in Section \ref{sec:method:inference}, the values of $t_i$ and $\tau_i$ used in the reward depend implicitly on the current segmentation policy $\Theta$. Therefore, at each training step, we temporarily freeze $\Theta$ and update $t_i$ and $\tau_i$ by solving the maximum likelihood estimation in Equation \eqref{eq:mle_logistic} over $x_i$ and its current set of nearest neighbors.



\paragraph{Training Efficiency Considerations}
While the segmentations for $x_i$ and its neighbors $x_j$ are generated online at each training step, the nearest neighbor mapping $nn_{\Theta}(\cdot)$ itself is also dependent on $\Theta$. Ideally, updating $\Theta$ would require recalculating this mapping for the entire cache at every step, as segmentations for all cached prompts would change. However, performing this full-neighbor recalculation online in every iteration is prohibitively expensive. To resolve this efficiency bottleneck, we {\em keep the nearest neighbor mapping fixed} at each training step and only update it periodically—once every $K$ steps—where $K$ is a hyperparameter. This strategy decouples the costly neighbor search from the frequent policy updates, achieving an effective balance between model performance and computational cost.

\paragraph{Training data.} 
Following \cite{schroeder2025vcache}, we assume access to ground-truth LLM responses for all training prompts. The label $c_j$ for a pair of prompts $(x_i,x_j)$ is determined via exact string matching of their corresponding LLM responses.

\begin{algorithm}[tb]
\small
  \caption{Training $\Theta$ via offline RL}
  \label{alg:offline_rl}
  \begin{algorithmic}
    \STATE {\bfseries Input:} A training set of prompts with ground-truth LLM responses $\mathcal{T}$; The policy model $\pi_\theta(\cdot\mid x)$
    \FOR{$t=1$ {\bfseries to} $T$}
      \STATE Sample a prompt $x_i$ from $\mathcal{T}$, determine all $x_j$'s in $\mathcal{T}$ satisfying $nn(x_j)=x_i$
      \STATE Sample segmentation $\overrightarrow{p}$ from $\pi_{\Theta}(\cdot|x)$ for each $x=x_i$ and $x_j$'s satisfying $nn(x_j)=x_i$
      \STATE Perform inference on $x_i$ and each $x_j$
      \STATE Compute $\text{SMaxSim}_{\Theta}(x_i,x_j)$ between $x_i$ and each $x_j$.
      \STATE Update $\Theta$ by optimizing Equation \eqref{eq:pg_obj_rewrite} with REINFORCE
    \STATE Update $t_i$ and $\gamma_i$ by solving Equation \eqref{eq:mle_logistic} for $x_i$
    \IF{$t \% K == 0$}
    \STATE Segment all prompts in $\mathcal{T}$ using $\Theta$ and update the set of prompts taking $x_i$ as the nearest neighbor
    \ENDIF

    \ENDFOR
  \end{algorithmic}
\end{algorithm}

\section{Experiments}\label{sec: exp}

\begin{figure*}
    \centering
    \includegraphics[width=\linewidth]{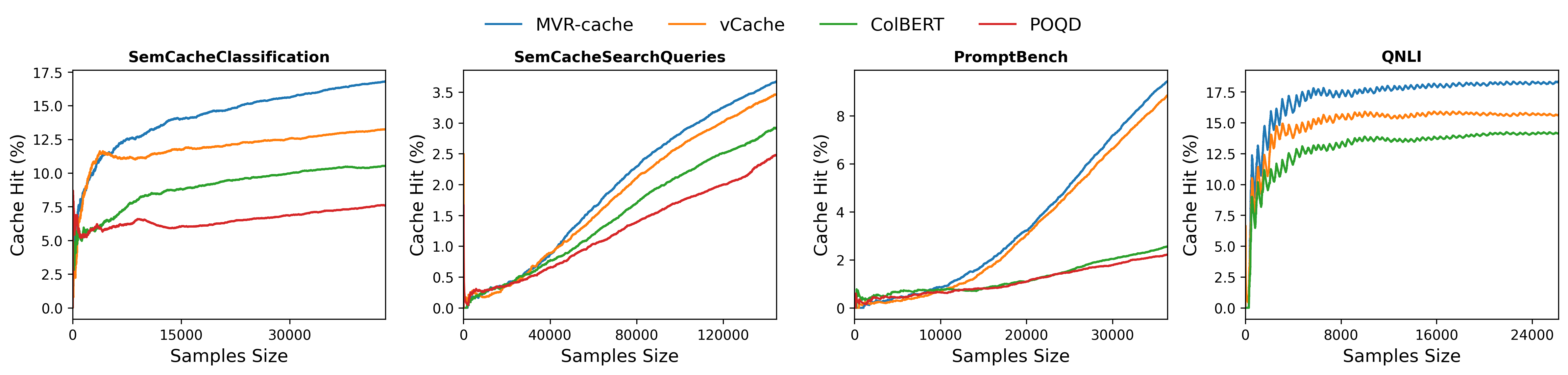}
    \caption{The cumulative cache hit rate VS increasing number of incoming prompts under the {\em cache-on-miss} protocol with $\delta=0.01$}
    \label{fig:cache_hit_0.01}
\end{figure*}

\begin{figure*}
    \centering
    \includegraphics[width=\linewidth]{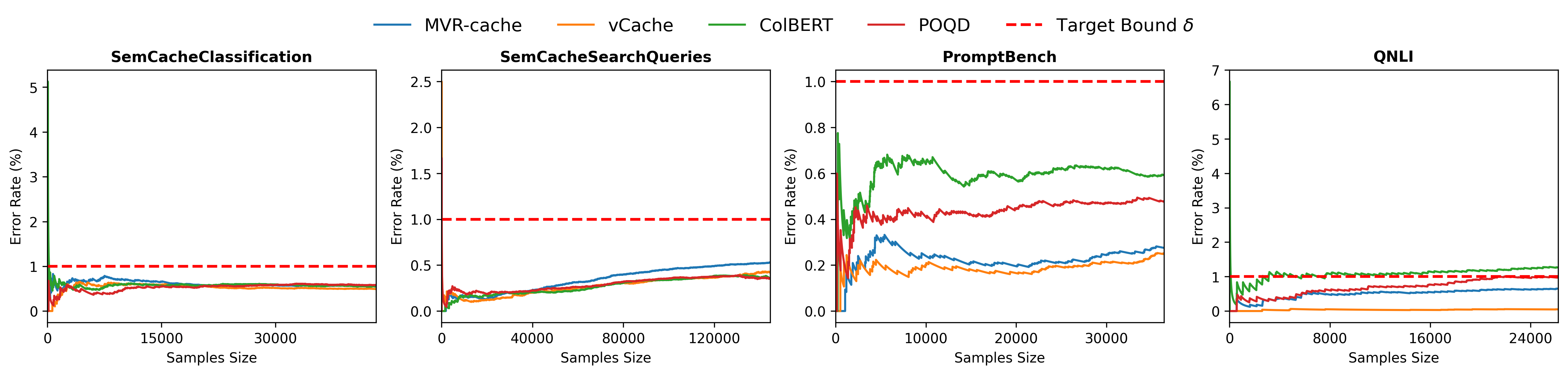}
    \caption{The cumulative error rate VS increasing number of incoming prompts  under the {\em cache-on-miss } protocol with $\delta=0.01$}
    \label{fig:error_rate_0.01}
\end{figure*}

\begin{figure*}[t]
    \centering

    \begin{minipage}[t]{0.68\textwidth}
        \vspace{0pt}
        \centering

        \captionof{table}{Cumulative end-to-end inference latency (minutes). Values in parentheses denote algorithm running time excluding LLM calls.}
        \label{tab:latency}

        \resizebox{\textwidth}{!}{
        \begin{tabular}{c|cccc}
            \toprule
             & SemCacheClassification & SemCacheSearchQueries & PromptBench & QNLI  \\
            \midrule
            vCache
            & \makecell{408.49 (23.21)}
            & \makecell{6361.52 (69.77)}
            & \makecell{1870.57 (19.58)}
            & \makecell{1536.00 (14.10)} \\
            
            ColBert
            & \makecell{501.46 (25.84)}
            & 6521.89 (130.00)
            & 2294.38 (150.32)
            & 1626.37 (39.28) \\
            
            POQD
            & \makecell{971.51 (492.92)}
            & 6990.08 (628.33)
            & 2945.20 (959.60)
            & 2648.80 (1048.48) \\
            
            \midrule
            
            \ours
            & \makecell{\textbf{383.32} (34.14)}
            & \makecell{\textbf{6345.61} (111.26)}
            & \makecell{\textbf{1866.58} (27.49)}
            & \makecell{\textbf{1504.43} (17.62)} \\
            
            \bottomrule
        \end{tabular}
        }
    \end{minipage}
    \hfill
    \begin{minipage}[t]{0.28\textwidth}
        \vspace{0pt}
        \centering


        \includegraphics[width=\textwidth]{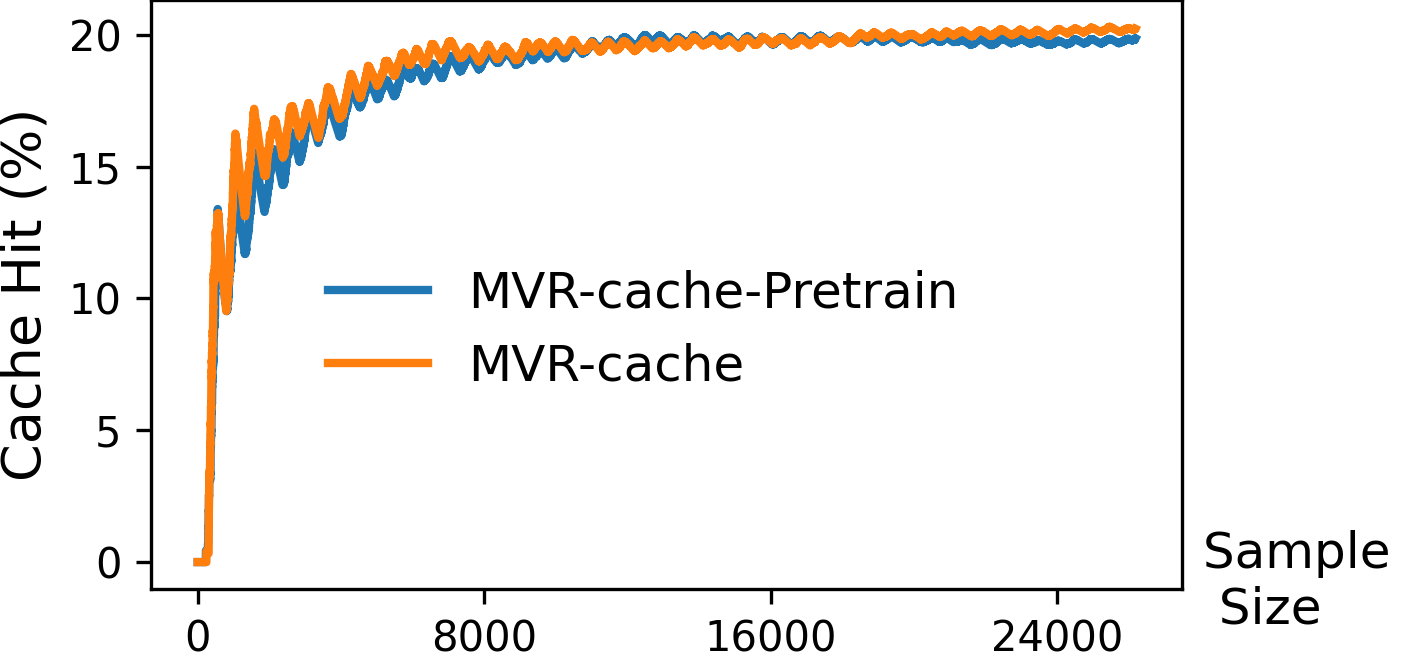}

        \captionof{figure}{Generalization of the segmentation model trained on PromptBench to QNLI.}
        \label{fig:generalizability}
    \end{minipage}

\end{figure*}

\subsection{Experimental setup}\label{sec: exp_setup}


\paragraph{Baseline} We compare \ours\ against the state-of-the-art semantic caching method {\bf vCache}\cite{schroeder2025vcache}, which provides a correctness guarantee on its caching policy. 
In addition, we attempt to incorporate the two representative segmentation methods used in MVR, {\bf ColBert}\cite{khattab2020colbert} and {\bf POQD}\cite{liupoqd}, into vCache.
The details of adapting these two methods to vCache are provided in Appendix \ref{sec: additional_setup}. Detailed configurations of \ours\ are provided in Appendix \ref{sec: additional_setup}.

\paragraph{Prompt-insertion protocols}
We evaluate semantic caching under two prompt-insertion protocols. Our default setting follows the standard vCache protocol~\cite{schroeder2025vcache}, which we denote as {\em cache-on-miss only}: an incoming prompt is inserted into the cache only when it results in a cache miss, while prompts served by cache hits are not inserted. This protocol is used for all main cache hit rate, error rate, and latency results. We also evaluate a {\em always-cache} protocol, where every incoming prompt is inserted into the cache regardless of whether it is a hit or miss. This setting keeps the cache contents identical across methods and isolates the effect of retrieval quality from differences in cache growth.

\paragraph{Datasets} We evaluate \ours\ and baseline methods on the following four datasets, covering a variety of tasks. 
\begin{itemize}[noitemsep, topsep=0pt, left=0pt]
    \item {\bf SemCacheSearchQueries}: a subset of 150K prompts from the ORCAS dataset \cite{craswell2020orcas} for the {\em web search} task. 
    \item {\bf SemCacheClassification}: a benchmark containing 45K short prompts for {\em classification}, collected from three diverse ecommerce text classification dataset~\cite{talmor2019commonsenseqa,ni2019justifying}.\footnote{\scriptsize \url{https://www.kaggle.com/datasets/saurabhshahane/ecommerce-text-classification}}
    
    \item {\bf PromptBench}: PromptBench \cite{zhu2024promptbench} provides a framework to comprehensively evaluate the robustness of LLMs by perturbing prompts. We follow \cite{zhu2024promptbench} to perturb each prompt of the SQUAD-V2 \cite{rajpurkar2018know} dataset in different manners such that perturbed prompts can ideally have the same LLM responses to the original ones. We follow \cite{schroeder2025vcache} to determine the equivalence of the LLM responses between prompts. This finally yields 38K prompts for the {\em question answering} task. 
    \item {\bf QNLI:} is another {\em question answering} benchmark \cite{wang2018glue}. We follow the same procedure as above to perturb prompts and generate labels, which results in 29K prompts. 
\end{itemize}

While the first two datasets have been used to evaluate vCache \cite{schroeder2025vcache}, they consist primarily of short prompts that do not reflect more complex, emerging scenarios—such as multi-turn conversations or reasoning tasks with lengthy, semantically rich prompts. To address this, we include the PromptBench and QNLI datasets, which capture these previously overlooked settings.

Unlike prior systems such as vCache, our approach involves offline training of a segmentation model. Accordingly, we split each dataset into training, validation, and test subsets. The model is trained on the training split, with the validation split used for model selection. Notably, {\em the training split is intentionally limited—containing only 3K prompts per dataset}—to reflect the practical constraints of obtaining labeled data for this task. Indeed, according to the ablation studies in Appendix \ref{sec: additional_exp}, a training set of 3K is sufficient to train the segmentation model. 
To ensure fair comparison with baseline methods like vCache, all approaches are evaluated on the same set of test prompts.



\paragraph{Embedding models and LLMs} Throughout the experiments, the BGE model \cite{bge-m3} is configured as the default embedding model to embed prompt segments, while GPT-4o-mini \cite{hurst2024gpt} is employed to generate the ground-truth response for each prompt. 

\paragraph{Metrics} We report the following metrics: 1) {\em cache hit rate}, defined as the ratio of the prompt that exploits the cache to the total number of prompts; 2) {\em error rate}, calculated as the ratio of false positive cache hits to the total number of new prompts; 3) {\em latency}, which measures the overall inference time, including the segmentation time, embedding time, retrieval time, and LLM invocation time (if cache miss).

\subsection{Experimental results}
Due to space limits, we report results for $\delta=0.01$; the results for other $\delta$'s are in Appendix~\ref{sec: additional_exp}.

\paragraph{\ours\ achieves higher cache hit rates under the same error bound}
We follow \cite{schroeder2025vcache} to plot the curve of the cumulative cache hit rate as the number of incoming prompts increases, which is shown in Figure \ref{fig:cache_hit_0.01}. We also examine the real error rate in this setting. As reported in Figure \ref{fig:error_rate_0.01}, the error rate of all methods, including \ours, gradually increases to a stable value below the user-specified $\delta$, indicating that \ours\ respects the correctness guarantees. 

The similar performance pattern indeed occurs under the {\em always-cache} protocol.
As shown in Figures~\ref{fig:controlled_latency_cache_hit} and~\ref{fig:controlled_latency_error_rate} (see Appendix \ref{sec: additional_exp}), \ours\ maintains much higher cache hit rates while keeping the error rate below the user-specified bound under this protocol. Figure~\ref{fig:controlled_latency_cache_hit}  suggests that \ours\ can consistently achieve higher cache hit rates than all baseline methods, where the gains are up to 37\% (see the results of SemCacheSearchQueries dataset on {\em always-cache} protocol).

\paragraph{\ours\ reduces the end-to-end inference time.}
Although \ours\ introduces additional computation from prompt segmentation and multi-vector retrieval, it achieves higher cache hit rates, which reduces costly LLM invocations and accelerates end-to-end inference. As shown in Table~\ref{tab:latency} (under the default {\em cache-on-miss only} protocol), \ours\ reduces the inference overhead by up to 6\% under the same error bound, while its algorithmic overhead excluding LLM calls remains small compared with the dominant LLM inference cost. In contrast, POQD incurs substantially higher latency because it requires an additional LLM to segment each prompt, making it less suitable for time-sensitive semantic caching.


\begin{figure*}
\centering
\includegraphics[width=\linewidth]{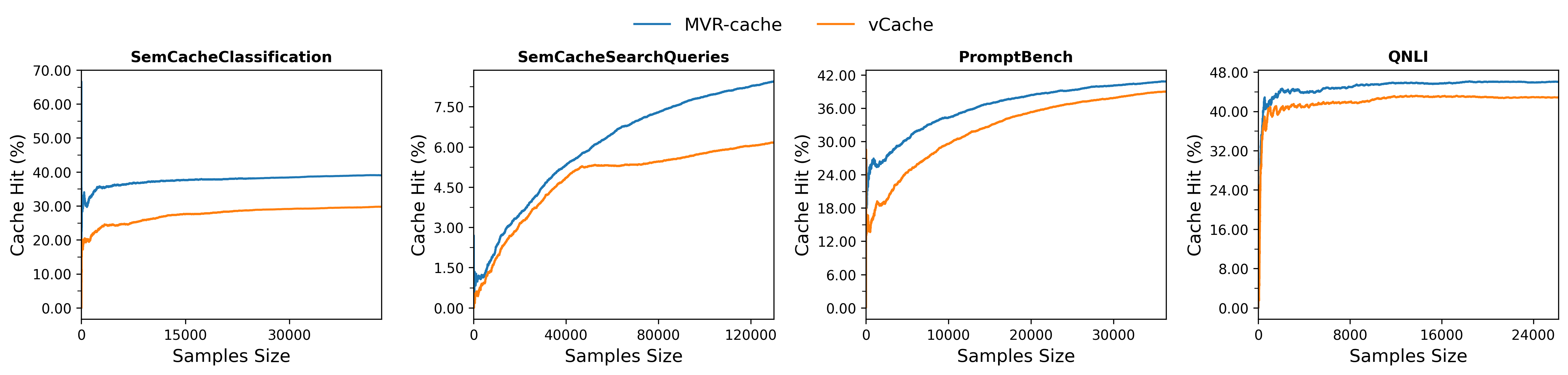}
\caption{Cumulative cache hit rate e VS increasing number of incoming prompts under the {\em always-cache} protocol with $\delta=0.01$.
}
\label{fig:controlled_latency_cache_hit}
\end{figure*}

\paragraph{\ours\ generalizes across different datasets} 
Unlike vCache, which is a training-free method, \ours\ requires performing additional training to obtain one segmentation model for prompt segmentation, which is conducted per dataset in the above experiments. However, we additionally evaluate the generalizability of the segmentation model trained on the PromptBench dataset to the QNLI dataset, which is reported in Figure \ref{fig:generalizability}. This figure surprisingly reveals that our segmentation model still outperforms all baseline methods in terms of the cache hit rate, even in the out-of-the-distribution setting.

\paragraph{Training-label cost }
MVR-cache uses labeled prompt pairs for offline training, collecting ground-truth LLM responses for 3K prompts per dataset this introduces a one-time labeling cost. This one-time cost is outweighed by downstream benefits: on SemCacheClassification, MVR-cache increases cache hits by 9\% over vCache, saving roughly 4.1K LLM calls, already exceeding the training-label cost. Appendix~\ref{sec: ablation_studies} shows that increasing the training set yields minimal additional gains, indicating data-efficient segmentation. We also evaluate a weak-supervision variant that uses GPT-4o-mini as a proxy labeler for GPT-4 outputs and queries GPT-4 only when proxy confidence is low. On SemCacheClassification, this avoids 80.4\% of GPT-4 label calls while maintaining 97.1\% agreement on proxy-labeled samples, suggesting that labeling cost can be substantially reduced without changing the training framework.

\paragraph{Additional experimental results.}
Further analyses, including ablations on the embedding model, training set size, and a detailed per-prompt online overhead breakdown, are provided in Appendix~\ref{sec: additional_exp}. 




\section{Related work}

\paragraph{Semantic Caching} 
Current efforts to improve semantic cache hit rates largely focus on optimizing caching policies through similarity measures between prompts. While some works fine-tune embedding models to align prompts and responses \cite{zhu2024efficient,zhangthreshold}, these approaches lack generalizability to closed-source models and are susceptible to distribution shifts \cite{hajipour2024simscood}. Hence, the state-of-the-art methods avoid model tuning and instead rely on prompt similarity. These include static threshold policies \cite{dasgupta2024wallmartcache, bang2023gptcache} \footnotemark[1] \footnotemark[2] and adaptive approaches like vCache \cite{schroeder2025vcache}, which learns prompt-specific thresholds online with correctness guarantees. {\em In contrast, our work is the first to enhance cache hit rates by improving the similarity measure itself, integrating multi-vector retrieval (MVR) and MaxSim}.

\paragraph{Multi-Vector Retrieval}
Multi-vector retrieval (MVR) overcomes the representational limitations of single-vector dense retrieval by encoding queries and documents as sets of vectors, scored via the MaxSim operator from ColBERT \cite{khattab2020colbert}. While subsequent work has primarily focused on improving the efficiency of MVR systems \cite{santhanam2022colbertv2,santhanam2022plaid,gao2021coil,li2023citadel}, recent research \cite{liupoqd} demonstrates the critical impact of query segmentation granularity on performance and introduces an adaptive segmentation strategy guided by retrieval accuracy. {\em Diverging from this approach—which uses heuristic document segmentation—our method (\ours) jointly optimizes the segmentation of both new queries and cached documents.} Additional query reformulation work is discussed in Appendix~\ref{sec:query_reformulation_related_work}.

\section{Conclusion}
We present \ours\ that integrates MVR with a learnable segmentation model. 
We derive a training objective that maximize hit rates while preserving correctness guarantees and formulate segmentation as a reinforcement learning for the combinatorial optimization problem.



\newpage
\section*{Acknowledgements}
This work is supported by ``The Fundamental Research Funds for the Central Universities, Peking University''.

\section*{Impact Statement}
This paper presents work whose goal is to advance the field of machine learning. There are many potential societal consequences of our work, none of which we feel must be specifically highlighted here.
\bibliographystyle{icml2026}
\bibliography{example_paper}

\newpage

\onecolumn

\addcontentsline{toc}{section}{Appendices}
\setcounter{section}{0}
\renewcommand{\thesection}{\Alph{section}}
\section{Proof}\label{sec: proof}

\subsection{Preliminary for the Proof}\label{sec: proof_prelim}

\begin{definition}[Two similarity metrics]
Let $s_{\text{vcache}}$ be the similarity score between one prompt and its nearest neighbor in the cache store using the conventional SVR-based cosine similarity score, $s_{\text{vcache}}(\cdot)$. Likewise, let $s_{\text{\ours}}$ be the similarity score under the learned similarity metric $s_{\text{new}}(\cdot)$ in \ours.
Both are assumed to be random variables in $[0,1]$.
We write $s_m$ for $m\in\{\text{vcache},\text{\ours}\}$.
\end{definition}

\begin{definition}[Class-conditional probability distribution]\label{def: conditional}
Assume each $s_m$ admits class-conditional densities on $[0,1]$:
\[
f_1(s) := f(s\mid C=1), \qquad f_0(s) := f_m(s\mid C=0),
\]
and corresponding CDFs $f_1,f_0$. The marginal density is the mixture
\[
p(s) := f(s) = \pi f_1(s) + (1-\pi)f_0(s),
\]
where $\pi=\Prb(C=1)$.
\end{definition}

\begin{definition}
The population-level MLE loss is defined as:
\[
\text{MLE}(P_0,P_1)
=
\mathbb{E}\!\left[
\textbf{BCELoss}(\mathcal{L}(s;t,\gamma), c)\right] = 
\frac{1}{2}\,
\mathbb{E}_{P_1}
\big[
\log(1+e^{-\gamma(s-t)})
\big]
+
\frac{1}{2}\,
\mathbb{E}_{P_0}
\big[
\log(1+e^{\gamma(s-t)})
\big].
\]
\end{definition}

\begin{assumption}[Balanced distribution assumption]
    We further assume that $\pi = \Pr(C=1) = \Pr(C=0) = 0.5$, 
\end{assumption}

\begin{assumption}[Equal-variance location model; correct specification]
\label{assump:gauss}
For each method $m\in\{\text{vcache},\text{\ours}\}$,
we assume that the similarity score conditioned on the label $c$ (which is equal to 0 or 1), $s(x)\mid(C=c)$, follows Gaussian distribution:
\[
s(x) \mid (C=1) \sim P_1 = \mathcal{N}(\mu_{1},\sigma^2),\qquad
s(x) \mid (C=0) \sim P_0 = \mathcal{N}(\mu_{0},\sigma^2),
\]


where $s_m(x)$ denotes the similarity score between a prompt $x$ and its nearest neighbor from the cache store using either vcache or \ours, $\mu_{c}$ is the mean of the Gaussian distribution, dependent on the method $m$ and the annotation label, $c$. We further assume an identical standard deviation of these normal distributions across different $c$ and $m$. 

Hence, $f_{c}$ ($c=0$ or $1$) defined in Definition \ref{def: conditional} denotes the density function of a Gaussian distribution, i.e.:
\begin{align}\label{eq: gaussian_density}
    f_{c}(s) = f(s|C=c) = \frac{1}{\sqrt{2\pi} \sigma} \exp(-\frac{(s-\mu_{c})^2}{2\sigma^2})
\end{align}
\end{assumption}

\begin{lemma}[Closed-form posterior and logistic parameters]
\label{lem:posterior}
Under Assumption~\ref{assump:gauss}, the correctness probability defined in Equation \eqref{eq:logistic},  $\eta(s):=\Prb(C=1\mid S=s)$ is rewritten as follows:
\begin{align}\label{eq: eta}
\eta(s) = \frac{1}{1+\exp\{-\gamma(s-t)\}},    
\end{align}
with
\begin{equation}
\gamma = \frac{\mu_{1}-\mu_{0}}{\sigma^2},
\qquad
t = \frac{\mu_{1}+\mu_{0}}{2}.
\label{eq:t_gamma_closed}
\end{equation}
\end{lemma}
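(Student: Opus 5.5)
The plan is a direct Bayes computation. Under Assumption~\ref{assump:gauss} the class-conditional densities $f_1,f_0$ are the Gaussian densities in Equation~\eqref{eq: gaussian_density}. Under the balanced prior $\pi=1/2$, Bayes' rule gives
\[
\eta(s)=\frac{\pi f_1(s)}{\pi f_1(s)+(1-\pi)f_0(s)}=\frac{1}{1+f_0(s)/f_1(s)}.
\]
So the whole task reduces to showing that the log-likelihood ratio $\log\bigl(f_1(s)/f_0(s)\bigr)$ is affine in $s$ and can be written in the form $\gamma(s-t)$.

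Because the two Gaussians share the same $\sigma$, the normalizing constants $1/(\sqrt{2\pi}\sigma)$ cancel in the ratio, leaving
\[
\log\frac{f_1(s)}{f_0(s)}=\frac{(s-\mu_0)^2-(s-\mu_1)^2}{2\sigma^2}.
\]
Expanding the numerator, the $s^2$ terms cancel; this is exactly where equal variance is needed. What remains is $2s(\mu_1-\mu_0)-(\mu_1^2-\mu_0^2)$. Factoring out $(\mu_1-\mu_0)$ and using $\mu_1^2-\mu_0^2=(\mu_1-\mu_0)(\mu_1+\mu_0)$, the log-ratio becomes
\[
\frac{\mu_1-\mu_0}{\sigma^2}\Bigl(s-\frac{\mu_1+\mu_0}{2}\Bigr).
\]
Substituting back gives $\eta(s)=1/(1+\exp\{-\gamma(s-t)\})$ with $\gamma$ and $t$ as in Equation~\eqref{eq:t_gamma_closed}. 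This matches the sigmoid of Equation~\eqref{eq:logistic}.

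Two bookkeeping points need care. First, if the prior were not balanced, an extra additive term $\log(\pi/(1-\pi))$ would appear, which would shift $t$. Noting this here helps prepare the class-rebalanced Lemma~\ref{lemma: imbalanced}, but under the balanced assumption the term vanishes. Second, the condition $\gamma>0$ required in Equation~\eqref{eq:logistic} holds exactly when $\mu_1>\mu_0$. I will state this as the natural ordering: correct matches have higher similarity. When $\mu_1<\mu_0$ the formula still holds but with negative $\gamma$.

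There is no real obstacle; the only subtlety is the modeling one. The scores are assumed to lie in $[0,1]$, yet Gaussians have full support. I would treat the Gaussian model as the stated idealization and compute on $\mathbb{R}$. I would also remark that, because the logistic family is then correctly specified, the population minimizer of the MLE loss recovers exactly these $(t,\gamma)$, which is what the later steps of the proof of Theorem~\ref{theorem: objectives} use.
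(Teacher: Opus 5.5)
Your proposal is correct and takes the same route as the paper's proof. Bayes' rule with the balanced prior reduces the log-odds to $\log(f_1/f_0)$; the paper also relies on this prior implicitly, despite citing only Assumption~\ref{assump:gauss}. The equal-variance Gaussian terms then collapse to $\frac{\mu_1-\mu_0}{\sigma^2}\bigl(s-\frac{\mu_1+\mu_0}{2}\bigr)$, and matching with $\gamma(s-t)$ identifies the parameters. Your side remarks are accurate points the paper leaves unstated: the $\log(\pi/(1-\pi))$ shift of $t$, the sign condition $\gamma>0$ iff $\mu_1>\mu_0$, and the need to compute with untruncated Gaussians on $\mathbb{R}$.
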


\begin{proof}
By Bayes' rule,
\[
\frac{\eta(s)}{1-\eta(s)} = \frac{\Prb(C=1\mid s)}{\Prb(C=0\mid s)} = \frac{\Prb(s\mid C=1)\Prb(C=1)}{\Prb(s\mid C=0)\Prb(C=0)} = 
\frac{f_1(s)}{f_0(s)}.
\]
Taking logs and substituting $f_{c}$ with Equation \eqref{eq: gaussian_density}, yielding
\begin{align}\label{eq: gamma_derive_1}
\begin{split}
\log\frac{\eta(s)}{1-\eta(s)}
&= 
\log\frac{f_1(s)}{f_0(s)}\\
&= 
-\frac{(s-\mu_{1})^2-(s-\mu_{0})^2}{2\sigma^2}\\
&= 
-\frac{(s^2-2s\mu_{1}+\mu_{1}^2)-(s^2-2s\mu_{0}+\mu_{0}^2)}{2\sigma^2}\\
&= 
\frac{\mu_{1}-\mu_{0}}{\sigma^2}s - \frac{\mu_{1}^2-\mu_{0}^2}{2\sigma^2}\\
&= \frac{\mu_{1}-\mu_{0}}{\sigma^2}\left(s-\frac{\mu_{1}+\mu_{0}}{2}\right).
\end{split}
\end{align}

Furthermore, according to the definition of $\eta(s)$ in Equation \eqref{eq: eta}, $\log\frac{\eta(s)}{1-\eta(s)}$ is further derived as:
\begin{align*}
    \log\frac{\eta(s)}{1-\eta(s)} = \log \frac{1}{\exp\{-\gamma(s-t)\}} = \gamma(s-t)
\end{align*}

By comparing the above formula against Equation \eqref{eq: gamma_derive_1}, we can obtain $\gamma := (\mu_{1}-\mu_{0})/\sigma^2$ and thus $t = \frac{\mu_{1}+\mu_{0}}{2}$.
\end{proof}






\subsection{Proof of Theorem \ref{theorem: objectives}}
\begin{theorem}
\label{thm:separation}
Let $\Delta := \mu_{1} - \mu_{0}$.
Under the above assumptions, the population-level MLE loss, $\text{MLE}(P_0,P_1)$ satisfies:
\begin{enumerate}
    \item $\text{MLE}(P_0,P_1)$ depends on $(P_0,P_1)$ only through $\Delta$.
    \item $\text{MLE}(P_0,P_1)$ is strictly decreasing in $\Delta$ for $0 < \Delta < 1$.
    \item Consequently, the global minimum of $\text{MLE}(P_0,P_1)$ over all feasible $(P_0,P_1)$ is attained at
    \[
    \mu_{1} = 1, \quad \mu_{0} = 0.
    \]
\end{enumerate}
\end{theorem}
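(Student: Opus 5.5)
The plan is to reduce the population-level MLE loss to a one-parameter function of the separation $\Delta/\sigma$ and then show that this function is strictly decreasing. Throughout, $\sigma$ is the common, fixed standard deviation from Assumption~\ref{assump:gauss}.

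\textbf{Step 1: evaluate the loss at its optimal parameters.} The first step is to identify which $(t,\gamma)$ enters $\text{MLE}(P_0,P_1)$. I take the loss to be evaluated at its minimizer over $(t,\gamma)$, since Equation~\eqref{eq:mle_logistic} is solved for these parameters. By Lemma~\ref{lem:posterior}, the logistic family is correctly specified. The BCE loss is a strictly proper scoring rule, because for each $s$ the expected loss $-\eta(s)\log q-(1-\eta(s))\log(1-q)$ is uniquely minimized at $q=\eta(s)$ (Gibbs' inequality). Hence the population minimizer is exactly $\gamma=\Delta/\sigma^2$ and $t=(\mu_1+\mu_0)/2$. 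The minimized loss is then the conditional entropy $H(C\mid S)$.

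\textbf{Step 2: show the loss depends only on $\Delta$ (part 1).} Substitute these parameters and standardize. Under $P_1$, write $s-t=\Delta/2+\sigma Z$ with $Z\sim\mathcal N(0,1)$, so that $\gamma(s-t)=a^2/2+aZ$ with $a:=\Delta/\sigma$. Under $P_0$ one gets $\gamma(s-t)=-a^2/2+aZ$. Using the symmetry $Z\mapsto -Z$, the two expectations coincide, and
\[
\text{MLE}(P_0,P_1)=g(a):=\mathbb{E}_Z\big[\log\big(1+e^{-a^2/2-aZ}\big)\big].
\]
Since $\sigma$ is fixed, this depends on $(P_0,P_1)$ only through $\Delta$, which proves part 1.

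\textbf{Step 3: show $g$ is strictly decreasing (part 2).} I would use a garbling (data-processing) argument rather than differentiating under the expectation. Work with the standardized score $\tilde S=(s-t)/\sigma$, whose class-conditional laws are $\mathcal N(\pm a/2,1)$. Take $0<a<a'$ and let $\tilde S'$ be the standardized score at separation $a'$. Then
\[
\tfrac{a}{a'}\tilde S'+\sqrt{1-(a/a')^2}\,W,\qquad W\sim\mathcal N(0,1)\ \text{independent},
\]
has exactly the class-conditional laws of the standardized score at separation $a$. This gives a Markov chain $C\to \tilde S'\to \tilde S$, so $H(C\mid \tilde S)\ge H(C\mid \tilde S')$, that is, $g(a)\ge g(a')$.

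\textbf{Step 4: strictness, the main obstacle.} The weak inequality does not give strict monotonicity. Equality in data processing would force $\tilde S$ to be sufficient for $C$ relative to $\tilde S'$, which fails because the posterior $\eta$ is a nonconstant function of $\tilde S'$ and is blurred by independent noise. My first attempt would make this rigorous by strict concavity of binary entropy, using Jensen's inequality applied to $\mathbb{E}[\eta(\tilde S')\mid \tilde S]$, which is nondegenerate. As a fallback, I would differentiate $g$ directly. With $u=a^2/2+aZ$,
\[
g'(a)=-\mathbb{E}\big[(a+Z)\,\sigma(-u)\big],
\]
where $\sigma(\cdot)$ here denotes the logistic sigmoid. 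Gaussian integration by parts, $\mathbb{E}[Zh(Z)]=\mathbb{E}[h'(Z)]$, should then show that this quantity is negative.

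\textbf{Step 5: the global minimum (part 3).} The feasibility constraint does the rest. Since the scores lie in $[0,1]$, the means satisfy $\mu_0,\mu_1\in[0,1]$, so $\Delta\le 1$. Equality $\Delta=1$ holds only at $\mu_1=1,\mu_0=0$. Combined with the strict decrease from Steps 3--4, this is the unique global minimizer of $\text{MLE}(P_0,P_1)$.
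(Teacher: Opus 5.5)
Your proof is correct, and on the key step (part 2) it takes a genuinely different route from the paper's. The paper substitutes the closed-form $\gamma=\Delta/\sigma^2$ and $t=(\mu_1+\mu_0)/2$ and centers at $u=s-t$. It then differentiates the integrand in $\Delta$ only through $\gamma$, holding the law of $u$ fixed, and asserts that $\mathbb{E}_{P_1}\bigl[u/(1+e^{\Delta u/\sigma^2})\bigr]>0$.

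That derivative is $\sigma^{-2}$ times the $\gamma$-partial of the loss at its own minimizer. It is therefore the MLE first-order condition and vanishes identically. To see this, write $f_1,f_0$ for the class-conditional densities of $u$. Then $f_1(u)=f_0(u)e^{\Delta u/\sigma^2}$ and $f_0(u)=f_1(-u)$, so $f_1(u)/(1+e^{\Delta u/\sigma^2})=f_0(u)f_1(u)/(f_0(u)+f_1(u))$ is even in $u$, and the paper's expectation is exactly $0$, not positive. By the envelope theorem, all of the $\Delta$-dependence comes from the movement of the class-conditional laws. Your reduction to $g(a)$ with $a=\Delta/\sigma$ keeps track of exactly that. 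Your Step 1 (strict propriety of BCE, minimized loss $=H(C\mid S)$) also justifies using the closed-form parameters, which the paper leaves implicit.

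The garbling-plus-strict-Jensen argument is sound. The noise $\sqrt{1-(a/a')^2}\,W$ makes the conditional law of $\tilde S'$ given $\tilde S$ nondegenerate, and the posterior at separation $a'$ is strictly monotone. It also buys generality: monotonicity holds for any family of score laws ordered by garbling, not only equal-variance Gaussians. Your fallback is shorter still, and you should write it out rather than leave it as ``should show''. Gaussian integration by parts gives
\[
g'(a)=-a\,\mathbb{E}\bigl[(1+e^{u})^{-2}\bigr]<0,\qquad u=\frac{a^2}{2}+aZ,
\]
for every $a>0$.

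One small repair is needed in Step 5. Since $g$ is even in $a$, claiming that $(\mu_1,\mu_0)=(1,0)$ is the \emph{unique} minimizer requires excluding $\Delta\le 0$. You can do this with the ordering $\mu_0\le\mu_1$, which the paper simply assumes. Alternatively, use the constraint $\gamma>0$ of the logistic model: any $\Delta\le 0$ then gives loss at least $\log 2=g(0)$.
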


\begin{proof}

{\bf Step 1: Explicit form of the risk}
Since the class prior is balanced, the population risk can be written as
\begin{align*}
\text{MLE}(P_0, P_1)
=
\frac{1}{2}\,
\mathbb{E}_{P_1}
\big[
\log(1+e^{-\gamma(s-t)})
\big]
+
\frac{1}{2}\,
\mathbb{E}_{P_0}
\big[
\log(1+e^{\gamma(s-t)})
\big].    
\end{align*}

Substituting
\(
\gamma = \Delta / \sigma^2
\)
and
\(
t = (\mu_{1}+\mu_{0})/2
\),
we obtain
\begin{align*}
\text{MLE}(P_0, P_1)
=
\frac{1}{2}
\mathbb{E}_{P_1}
\!\left[
\log\!\left(
1+
\exp\!\left(
-\frac{\Delta}{\sigma^2}
\left(
s-\frac{\mu_{1}+\mu_{0}}{2}
\right)
\right)
\right)
\right]
+
\frac{1}{2}
\mathbb{E}_{P_0}
\!\left[
\log\!\left(
1+
\exp\!\left(
\frac{\Delta}{\sigma^2}
\left(
s-\frac{\mu_{1}+\mu_{0}}{2}
\right)
\right)
\right)
\right].    
\end{align*}

{\bf Step 2: Translation invariance.}
Define the centered variable
\[
u := s - \frac{\mu_{1}+\mu_{0}}{2}.
\]
Then
\[
\mathbb{E}[u \mid y=1] = \frac{\Delta}{2},
\qquad
\mathbb{E}[u \mid y=0] = -\frac{\Delta}{2}.
\]
Under equal variance and bounded support, the distributions of $u \mid y=1$ and $u \mid y=0$ are translations of each other.
Therefore, the joint law of $u$ depends on $(P_0,P_1)$ only through $\Delta$.
Hence,
\[
\text{MLE}(P_0,P_1) = \text{MLE}(\Delta),
\]
which proves statement (1).

{\bf Step 3: Derivative with respect to $\Delta$.}
Differentiating under the expectation (justified by bounded support and smoothness),
\[
\frac{d\text{MLE}}{d\Delta}
=
-\frac{1}{2\sigma^2}
\mathbb{E}_{P_1}
\!\left[
\frac{u}{1+\exp\!\left(\frac{\Delta}{\sigma^2}u\right)}
\right]
+
\frac{1}{2\sigma^2}
\mathbb{E}_{P_0}
\!\left[
\frac{u}{1+\exp\!\left(-\frac{\Delta}{\sigma^2}u\right)}
\right].
\]
Using the symmetry
\(
u \mid y=0 \overset{d}{=} -u \mid y=1
\),
this simplifies to
\[
\frac{d\text{MLE}}{d\Delta}
=
-\frac{1}{\sigma^2}
\mathbb{E}_{P_1}
\!\left[
\frac{u}{1+\exp\!\left(\frac{\Delta}{\sigma^2}u\right)}
\right].
\]

{\bf Step 4: Sign of the derivative.}
Under $P_1$, the random variable $u$ has strictly positive mean $\Delta/2$ and bounded support.
The function
\[
g(u) = \frac{u}{1+\exp\!\left(\frac{\Delta}{\sigma^2}u\right)}
\]
has strictly positive expectation for $\Delta > 0$.
Therefore,
\[
\frac{d \text{MLE}}{d\Delta} < 0
\quad
\text{for all } \Delta \in (0,1),
\]
which proves statement (2).

{\bf Step 5: Optimality under bounded scores.}
Since $s \in [0,1]$, we have
\[
0 \le \mu_{0} \le \mu_{1} \le 1
\quad \Rightarrow \quad
0 \le \Delta \le 1.
\]
Because $\text{MLE}(\Delta)$ is strictly decreasing on $[0,1]$, its minimum is attained at $\Delta=1$, i.e.,
\[
\mu_{1} = 1, \quad \mu_{0} = 0.
\]
This proves statement (3).
\end{proof}

Theorem \ref{thm:separation} indicates that adapting the similarity score $s$ by minimizing the population-level MLE loss can push $\mu_{1}$ and $\mu_{0}$ to 1 and 0, respectively, thus maximizing the gap between these two variables, which is equivalent to maximizing $\gamma$.

Plus, the following lemma suggests the stability of the value of $t$ throughout training:





\begin{lemma}[Midpoint Stability]
Under the above assumptions, the value of $t$ is invariant under the gradient flow of the following logistic loss:
$$\ell(s,y) = -y \log g(s) - (1-y) \log(1 - g(s)),$$
in which, $$g(s) = \sigma(\gamma (s - t)) $$

If each score $s$ is updated via
\[
\dot s = -\frac{\partial \ell}{\partial s} = -\gamma (g(s) - y),
\]
then
\[
\dot t = 0,
\]
\end{lemma}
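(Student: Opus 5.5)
The plan is to write the midpoint $t=(\mu_1+\mu_0)/2$ from Lemma~\ref{lem:posterior} as a functional of the two class-conditional score distributions. I would then differentiate it along the flow and show that the two class contributions cancel, using the reflection symmetry already exploited in Step~3 of the proof of Theorem~\ref{thm:separation}.

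\textbf{Step 1: rates of change of the class means.} Since $\mu_c=\mathbb{E}_{P_c}[s]$ and each score moves with velocity $\dot s=-\gamma(g(s)-y)$, I would exchange derivative and expectation, justified by bounded support and smoothness as in Theorem~\ref{thm:separation}. This gives
\[
\dot\mu_1=\gamma\,\mathbb{E}_{P_1}\bigl[1-g(s)\bigr],\qquad
\dot\mu_0=-\gamma\,\mathbb{E}_{P_0}\bigl[g(s)\bigr].
\]
Hence
\[
\dot t=\tfrac{\gamma}{2}\Bigl(\mathbb{E}_{P_1}\bigl[1-g(s)\bigr]-\mathbb{E}_{P_0}\bigl[g(s)\bigr]\Bigr).
\]
Here $(\gamma,t)$ are the current values from \eqref{eq:t_gamma_closed}, held fixed at the instant of differentiation.

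\textbf{Step 2: cancellation via symmetry.} Pass to the centered variable $u=s-t$, so that $g(s)=\sigma(\gamma u)$. Under Assumption~\ref{assump:gauss}, with equal variances and means $t\pm\Delta/2$, we have $u\mid(y=0)\overset{d}{=}-u\mid(y=1)$. Combining this with the identity $\sigma(-z)=1-\sigma(z)$ gives
\[
\mathbb{E}_{P_0}[\sigma(\gamma u)]=\mathbb{E}_{P_1}[\sigma(-\gamma u)]=\mathbb{E}_{P_1}[1-\sigma(\gamma u)].
\]
Substituting this into Step~1 yields $\dot t=0$ at the current instant.

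\textbf{Step 3: persistence along the flow, which I expect to be the main obstacle.} Step~2 only shows that $\dot t=0$ at an instant where the symmetry holds. The Gaussian form need not be preserved by this nonlinear flow, so I would not try to propagate Assumption~\ref{assump:gauss} itself. Instead I would propagate only the reflection symmetry about $t$, which is all that Step~2 uses.

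In $u$-coordinates the velocity fields are $v_1(u)=\gamma(1-\sigma(\gamma u))$ for class $1$ and $v_0(u)=-\gamma\sigma(\gamma u)$ for class $0$. These satisfy
\[
v_0(-u)=-\gamma\bigl(1-\sigma(\gamma u)\bigr)=-v_1(u).
\]
So the reflection $u\mapsto -u$ maps class-$1$ trajectories exactly onto class-$0$ trajectories. By uniqueness of ODE solutions (the fields are Lipschitz), if $u\mid y=0\overset{d}{=}-u\mid y=1$ holds at time $0$, it holds for all times. This remains true as long as $t$ is the frozen centre, and $\dot t=0$ makes that self-consistent.

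The remaining subtlety is that $\gamma$ may drift, since $\Delta$ grows by Theorem~\ref{thm:separation}. That does not matter here: the argument uses no specific value of $\gamma$, only its sign-symmetric role in $v_0$ and $v_1$. A Grönwall or uniqueness argument on the coupled system (means and symmetric laws) then gives $t(\tau)\equiv t(0)$ for all $\tau$.
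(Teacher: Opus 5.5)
Your Steps~1--2 follow the paper's proof. The paper computes $\dot\mu_1=\gamma\,\mathbb{E}[1-g(s)\mid y=1]$ and $\dot\mu_0=-\gamma\,\mathbb{E}[g(s)\mid y=0]$ and cancels them using $f_1(t+\delta)=f_0(t-\delta)$ with $g(t+\delta)=1-g(t-\delta)$, which is your identity $u\mid(y=0)\overset{d}{=}-u\mid(y=1)$ combined with $\sigma(-z)=1-\sigma(z)$. Your Step~3 goes further than the paper, which only gets $\dot t=0$ at an instant where the Gaussian assumption holds and tacitly assumes it along the flow; propagating just the reflection symmetry via $v_0(-u)=-v_1(u)$, valid for any common, possibly time-varying $\gamma$, correctly closes that gap.
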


\begin{proof}
The logistic loss for a sample $(s,y)$ is
\[
\ell(s,y) = -y \log g(s) - (1-y) \log(1 - g(s)),
\]
with gradient
\[
\frac{\partial \ell}{\partial s} = \gamma(g(s) - y).
\]

The dynamics of the class means are:
\[
\dot \mu_1 = \mathbb{E}[\dot s \mid y=1] = -\gamma \, \mathbb{E}[g(s)-1 \mid y=1] = \gamma \, \mathbb{E}[1-g(s) \mid y=1],
\]
\[
\dot \mu_0 = \mathbb{E}[\dot s \mid y=0] = -\gamma \, \mathbb{E}[g(s)-0 \mid y=0] = -\gamma \, \mathbb{E}[g(s) \mid y=0].
\]

Thus the midpoint evolves according to
\[
\dot t = \frac{\dot \mu_1 + \dot \mu_0}{2} = \frac{\gamma}{2} \left( \mathbb{E}[1-g(s) \mid y=1] - \mathbb{E}[g(s) \mid y=0] \right).
\]

Since $f_1$ and $f_0$ are the density function of two normal distributions of the same variance while $t$ is the mid point of the means of these two normal distributions, then $f_1(t+\delta) = f_0(t-\delta)$. Plus, based on the definition of the function $g(\cdot)$, $g(t+\delta) = 1 - g(t-\delta)$ holds. Hence, we have
\[
\mathbb{E}[1-g(s) \mid y=1] = \int (1-g(t+\delta)) \, f_1(t+\delta) \, d\delta
= \int g(t-\delta) \, f_0(t-\delta) \, d\delta
= \mathbb{E}[g(s) \mid y=0].
\]

Hence,
\[
\dot m = \frac{\gamma}{2} \left( \mathbb{E}[g(s) \mid y=0] - \mathbb{E}[g(s) \mid y=0] \right) = 0.
\]

At the same time, the inter-class separation evolves as
\[
\frac{d}{d\tau} (\mu_1 - \mu_0) = \dot \mu_1 - \dot \mu_0 = \gamma \left( \mathbb{E}[1-g(s) \mid y=1] + \mathbb{E}[g(s) \mid y=0] \right) > 0,
\]
which is strictly positive as long as $\mu_1 < 1$ and $\mu_0 > 0$. Therefore, $t$ is stable while the separation grows monotonically.
\end{proof}


The above analysis suggests that minimizing the MLE loss can maximize $\mu_1$, minimize $\mu_0$, maximize $\gamma$, and maintain the value of $t$. Hence, for those samples with positive labels, $\mathcal{L}(s;t,\gamma)$ increases, and $\tau$ defined in Equation \eqref{eq:Gtau} decreases, thus decreasing the exploration probability and increasing the cache hit rate while the same error bound $\tau$ remains fixed.

\subsection{Proof of Lemma \ref{lemma: imbalanced}}
In the case of the imbalanced class distribution, we can then consider the following class-balanced version of the MLE loss:
\begin{align*}
    \sum_i [\frac{1}{\pi}\cdot\sum_{y=1}\log(1+e^{-\gamma(s-t)}) + \frac{1}{1-\pi}\sum_{y=0}\log(1+e^{\gamma(s-t)})],
\end{align*}
in which the class prior $\pi = \Pr(c=1)$ could be estimated by the ratio of the positive samples in the training set.

The corresponding population-level MLE loss is thus derived as:
\begin{align*}
\text{MLE}(P_0,P_1)
& =
\mathbb{E}\!\left[
\textbf{BCELoss}(\mathcal{L}(s;t,\gamma), c)\right] = 
\Pr(c=1)\,
\mathbb{E}_{P_1}
\big[
\frac{1}{\pi}\cdot\log(1+e^{-\gamma(s-t)})
\big]
+
\Pr(c=0)\,
\mathbb{E}_{P_0}
\big[
\frac{1}{1-\pi}\cdot\log(1+e^{\gamma(s-t)})
\big]\\
& = \mathbb{E}_{P_1}
\big[\log(1+e^{-\gamma(s-t)})
\big]
+ \mathbb{E}_{P_0}
\big[\log(1+e^{\gamma(s-t)})\big]
\end{align*}

This indicates that the population-level MLE loss remains the same. Hence, we can follow the same proof of Theorem \ref{theorem: objectives} to prove Lemma \ref{lemma: imbalanced}.




\section{Additional experimental setup}\label{sec: additional_setup}

\subsection{Additional details for baseline methods}

We provide the details of adapting ColBert and POQD to vCache as follows, which both use the symmetric MaxSim score as the similarity measure as \ours:

\begin{itemize}[noitemsep, topsep=0pt, left=0pt]
    \item {\bf ColBert}\cite{khattab2020colbert}: As a pioneering MVR method, it decomposes text into individual tokens and embeds each separately. 
    We adapt it for vCache by encoding tokens from cached and new prompts to produce sequences of embeddings.
    \item {\bf POQD}\cite{liupoqd}: A state-of-the-art MVR segmentation method that fine-tunes the prompt prefix for a general-purpose LLM to segment queries. 
    For documents in storage, POQD applies heuristic decomposition (e.g., splitting into sentences). Accordingly, we segment cached prompts by sentences, while segmenting incoming prompts using POQD, where fine-tuning is performed using the same training set as \ours.
\end{itemize}

\subsection{Configurations for \ours}\label{sec: ours_config}
Throughout the experiments, the candidate split positions $\mathcal{P}_x$ are defined as the indices of all punctuation marks in the prompt $x$. To enable efficient retrieval during inference, we use a two-stage retrieval pipeline. We first construct an HNSW index~\cite{malkov2018efficient} on the single-vector representations of cached prompts and retrieve the Top-20 nearest-neighbor candidates. We then rerank these candidates using the learned segmentation-aware MaxSim score $\text{SMaxSim}_{\Theta}$ to select the final nearest neighbor. Thus, MVR-cache uses single-vector retrieval only as a coarse candidate generator and does not require the single-vector Top-1 result to be correct, as long as the correct neighbor is included in the Top-20 candidates.

We validate this design on PromptBench: this two-stage retrieval pipeline achieves a nearest-neighbor recall of 0.179, which is close to 0.183 from a full MVR scan over the entire cache, while avoiding the substantially higher overhead of exhaustive multi-vector retrieval. This small gap indicates that Top-20 single-vector retrieval preserves most relevant candidates, and the MVR reranker can correct many single-vector Top-1 errors. While recent methods such as PLAID~\cite{santhanam2022plaid} enable efficient retrieval directly over multi-vector data, our preliminary tests show that they introduce substantial computational overhead and do not meet our real-time inference requirements.


\section{Additional Discussion}
\label{sec:additional_discussion}

\subsection{Additional Related Work on Query Reformulation}
\label{sec:query_reformulation_related_work}

A related line of retrieval work improves matching quality through query expansion (QE) and pseudo-relevance feedback (PRF), which reformulate the original query using related or feedback-derived terms~\cite{azad2019query, tu2026generalized}. These methods offer a different latency--quality trade-off from embedding-based approaches: they can improve recall with relatively low overhead, but primarily rely on lexical reformulation rather than learned semantic matching. Recent work further extends this direction with learning-based query reformulation, including reinforcement-learning methods~\cite{wang2020deep} and neural rewriting for conversational search~\cite{qian2022explicit}. Similar ideas also appear in semantic and approximate caching systems~\cite{ren2003semantic, bergman2025leveraging}, including ROSE~\cite{luo2022rose}, which improves cache robustness to misspellings and approximate matches through rewriting.

Our method is related in spirit to query reformulation, but differs in both objective and mechanism. Instead of generating rewritten queries, \ours\ learns a variable-number segmentation of each prompt and matches the resulting segments to cached prompts using a learned segmentation-aware similarity function. This allows the cache to compare prompts at a finer semantic granularity without replacing the original prompt with a rewritten textual form.

\subsection{Limitations and Future Work}
\label{sec:limitations_future_work}

MVR-cache currently segments prompts as linear text sequences. This design applies directly to multi-turn conversations by concatenating the system prompt, dialogue history, and current user query before segmentation. Extending the method to multimodal inputs requires new decomposition mechanisms, such as region-level units for images or temporal units for audio, and is left for future work.

\section{Additional experimental results}\label{sec: additional_exp}

We include the experimental results with varied values of $\delta$ in this section.

\subsection{Cache hit and error rate under varied $\delta$}
We further report the cache hit and error rate by varying $\delta$ among $\{0.015,0.02,0.03,0.05,0.07,0.08\}$. These results are included in Figure \ref{fig:cache_hit_02}-\ref{fig:error_rate_0.015}, which suggest that consistent performance gains of \ours\ in comparison to the baseline methods. 

\begin{figure*}
    \centering
    \includegraphics[width=\linewidth]{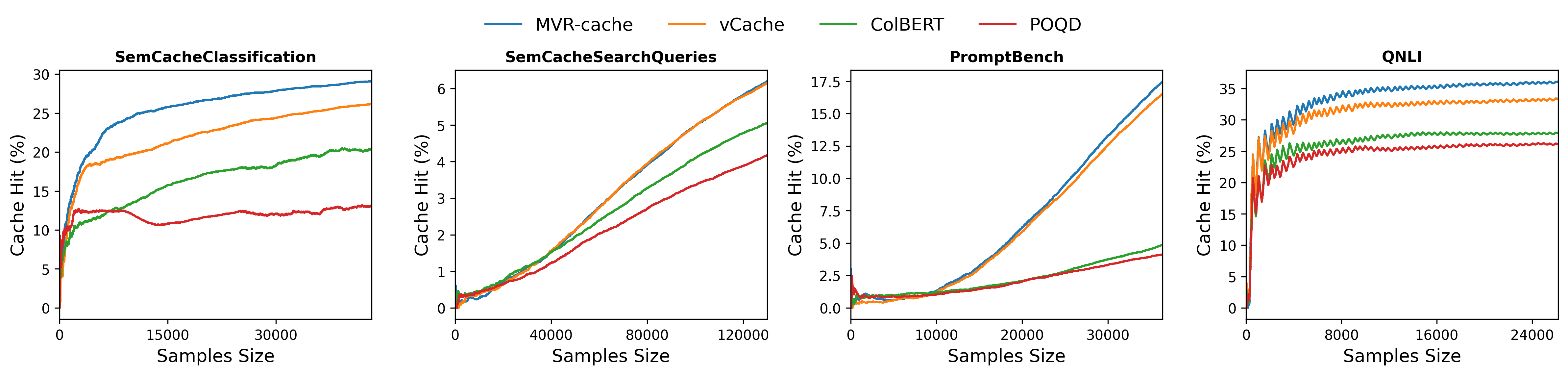}
    \caption{The cumulative cache hit rate VS increasing number of incoming prompts with $\delta=0.02$}
    \label{fig:cache_hit_02}
\end{figure*}

\begin{figure*}
    \centering
    \includegraphics[width=\linewidth]{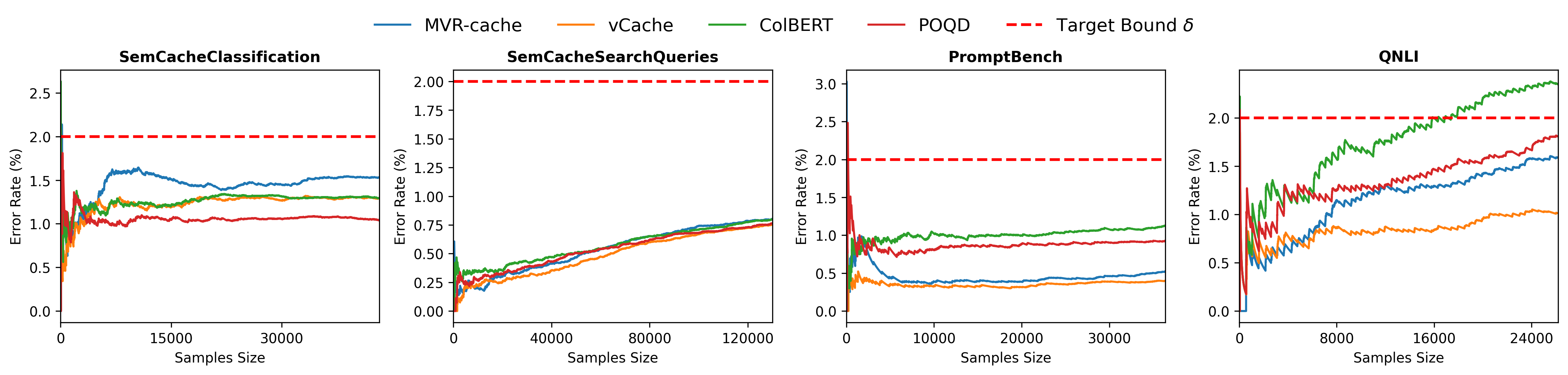}
    \caption{The cumulative error rate VS increasing number of incoming prompts with $\delta=0.02$}
    \label{fig:error_rate_0.02}
\end{figure*}

\begin{figure*}
    \centering
    \includegraphics[width=\linewidth]{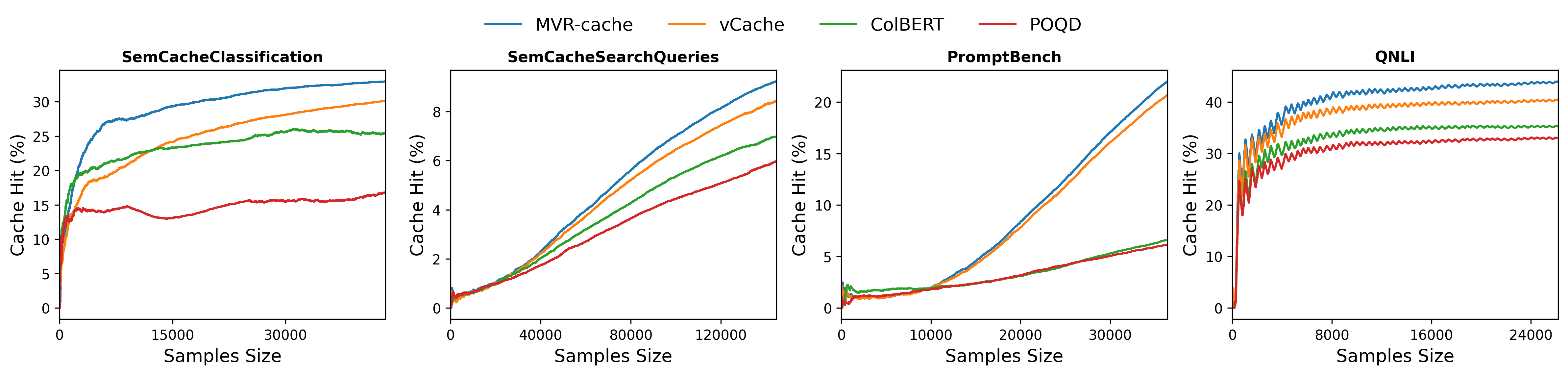}
    \caption{The cumulative cache hit rate VS increasing number of incoming prompts under the {\em cache-on-miss} protocol with $\delta=0.03$}
    \label{fig:cache_hit_03}
\end{figure*}

\begin{figure*}
    \centering
    \includegraphics[width=\linewidth]{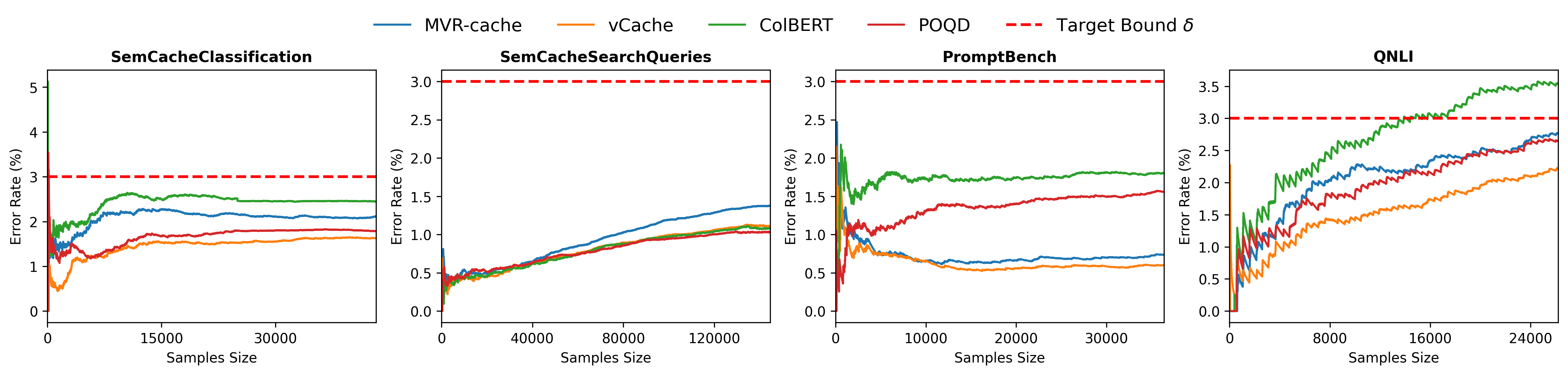}
    \caption{The cumulative error rate VS increasing number of incoming prompts under the {\em cache-on-miss} protocol with $\delta=0.03$}
    \label{fig:error_rate_0.03}
\end{figure*}

\begin{figure*}
    \centering
    \includegraphics[width=\linewidth]{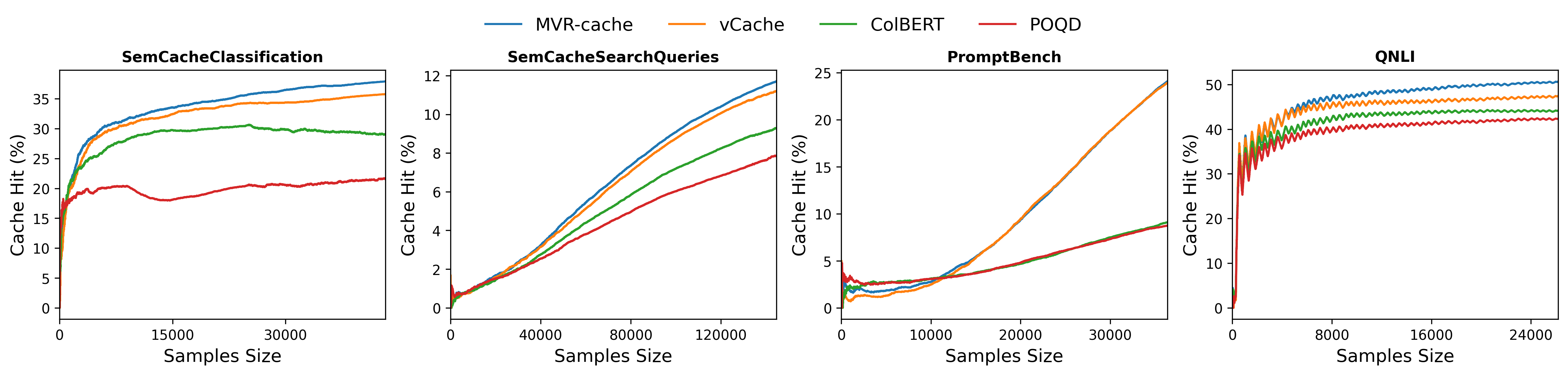}
    \caption{The cumulative cache hit rate VS increasing number of incoming prompts under the {\em cache-on-miss} protocol with $\delta=0.05$}
    \label{fig:cache_hit_05}
\end{figure*}

\begin{figure*}
    \centering
    \includegraphics[width=\linewidth]{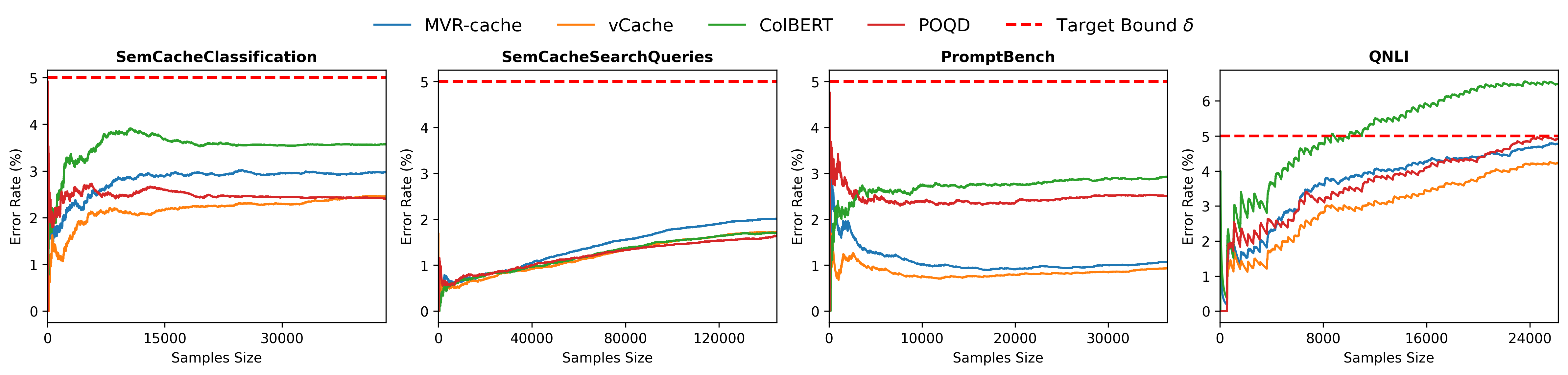}
    \caption{The cumulative error rate VS increasing number of incoming prompts under the {\em cache-on-miss} protocol with $\delta=0.05$}
    \label{fig:error_rate_0.05}
\end{figure*}

\begin{figure*}
    \centering
    \includegraphics[width=\linewidth]{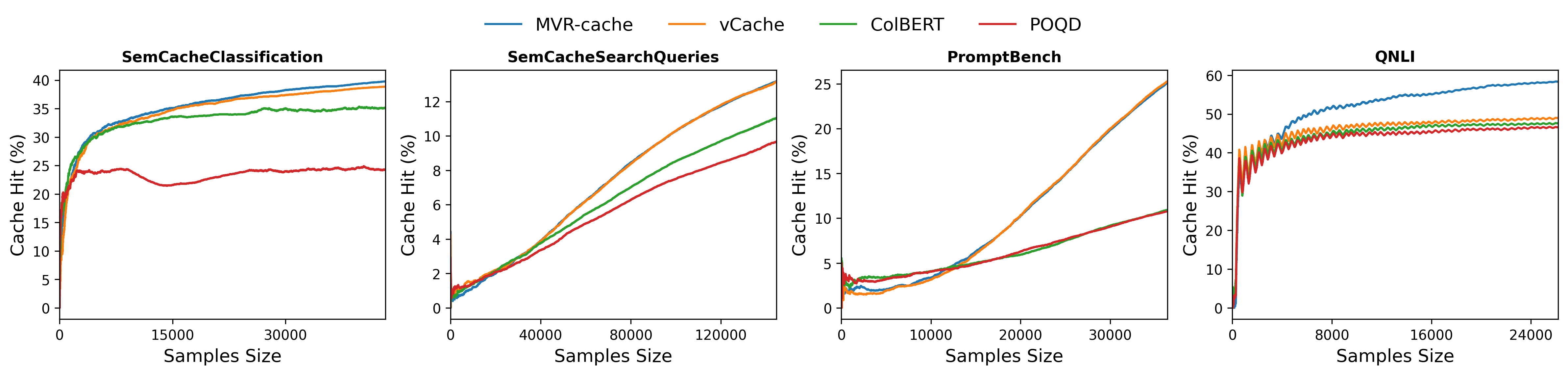}
    \caption{The cumulative cache hit rate VS increasing number of incoming prompts under the {\em cache-on-miss} protocol with $\delta=0.07$}
    \label{fig:cache_hit_07}
\end{figure*}

\begin{figure*}
    \centering
    \includegraphics[width=\linewidth]{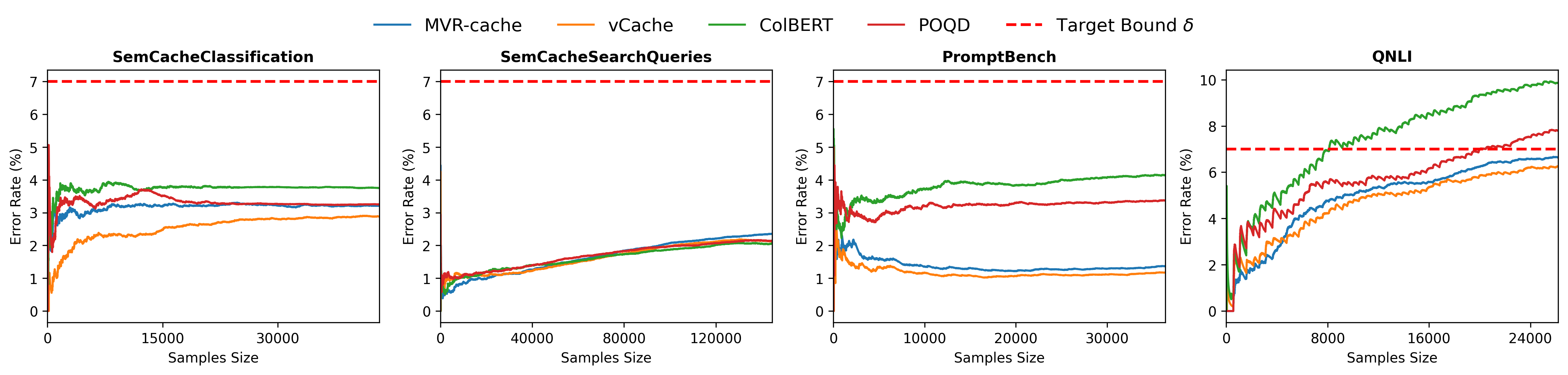}
    \caption{The cumulative error rate VS increasing number of incoming prompts under the {\em cache-on-miss} protocol with $\delta=0.07$}
    \label{fig:error_rate_0.07}
\end{figure*}

\begin{figure*}
    \centering
    \includegraphics[width=\linewidth]{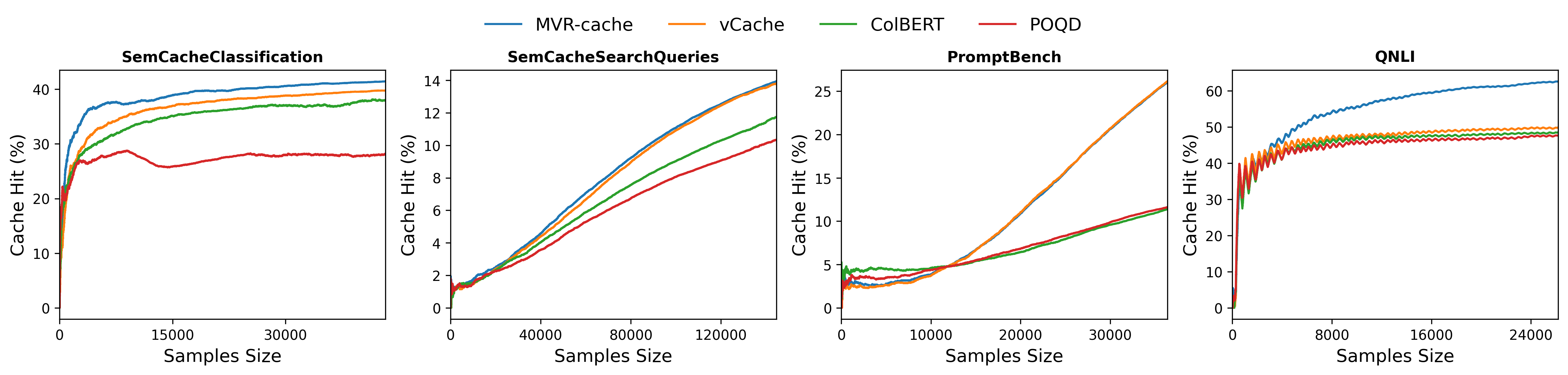}
    \caption{The cumulative cache hit rate VS increasing number of incoming prompts under the {\em cache-on-miss} protocol with $\delta=0.08$}
    \label{fig:cache_hit_08}
\end{figure*}

\begin{figure*}
    \centering
    \includegraphics[width=\linewidth]{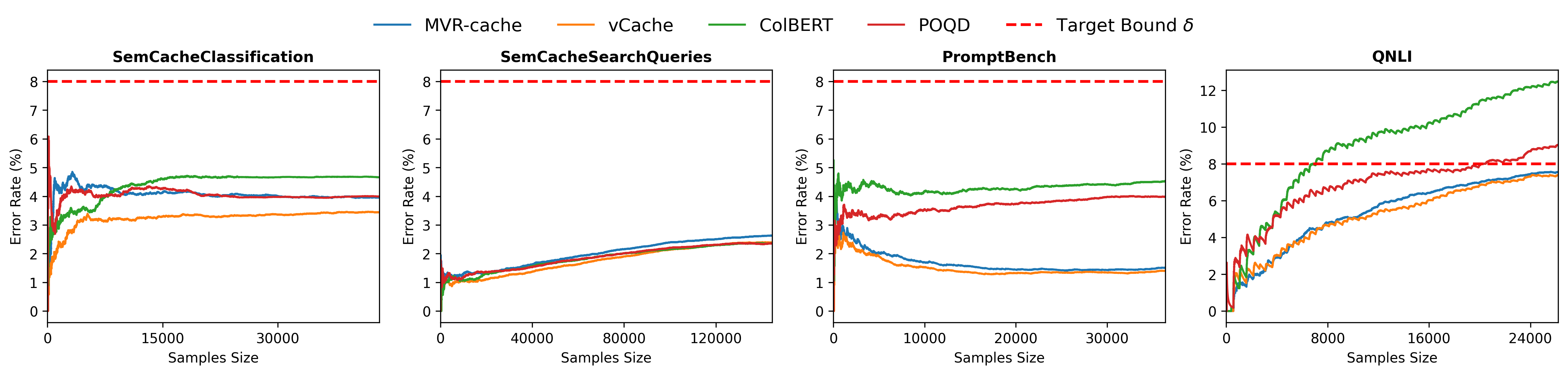}
    \caption{The cumulative error rate VS increasing number of incoming prompts under the {\em cache-on-miss} protocol with $\delta=0.08$}
    \label{fig:error_rate_0.08}
\end{figure*}

\begin{figure*}
    \centering
    \includegraphics[width=\linewidth]{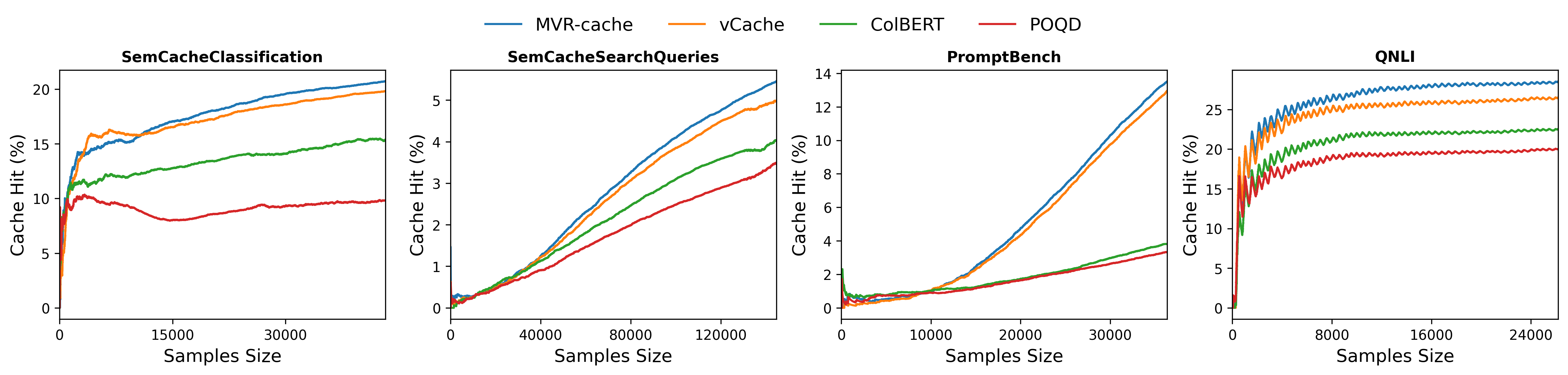}
    \caption{The cumulative cache hit rate VS increasing number of incoming prompts under the {\em cache-on-miss} protocol with $\delta=0.015$}
    \label{fig:cache_hit_015}
\end{figure*}

\begin{figure*}
    \centering
    \includegraphics[width=\linewidth]{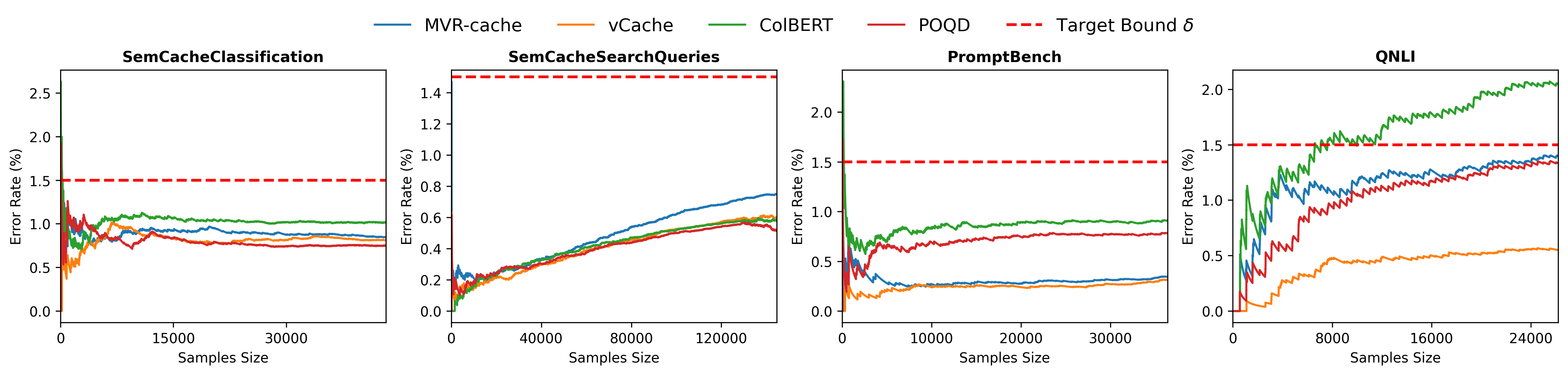}
    \caption{The cumulative error rate VS increasing number of incoming prompts under the {\em cache-on-miss} protocol with $\delta=0.015$}
    \label{fig:error_rate_0.015}
\end{figure*}

\begin{figure*}
\centering
\includegraphics[width=\linewidth]{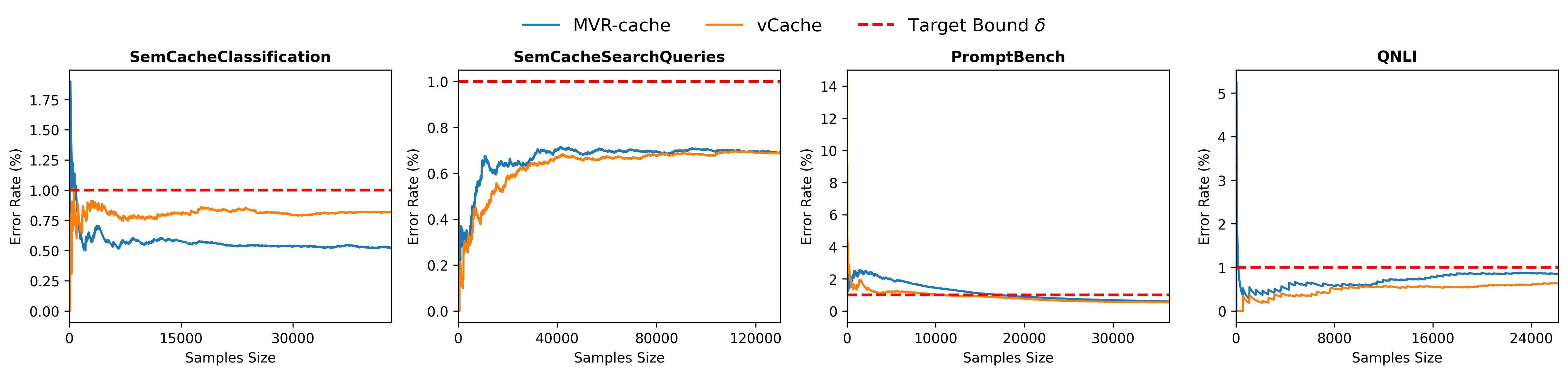}
\caption{Cumulative error rate e VS increasing number of incoming prompts under the {\em always-cache} protocol with $\delta=0.01$.}
\label{fig:controlled_latency_error_rate}
\end{figure*}

\subsection{Online overhead breakdown}
\label{sec:online_overhead}

Table~\ref{tab:online_cost} reports the average per-prompt latency breakdown. This table complements Table~\ref{tab:latency}: Table~\ref{tab:latency} reports cumulative end-to-end latency, while Table~\ref{tab:online_cost} decomposes the per-prompt cost into segmentation, embedding, retrieval/reranking, and one LLM call. All values are reported in milliseconds.

As shown in Table~\ref{tab:online_cost}, the online non-LLM overhead of MVR-cache is small compared with the latency of a single LLM call. The main additional cost comes from the lightweight segmentation step, while retrieval/reranking remains below 0.4 ms per prompt. Therefore, even for infrequent queries where this overhead cannot be amortized over many cache hits, LLM inference remains the dominant cost.

\begin{table*}[t]
\caption{Average per-prompt latency breakdown. All values are in milliseconds. The non-LLM total excludes the LLM call.}
\label{tab:online_cost}
\vskip 0.15in
\begin{center}
\begin{small}
\begin{tabular}{llccccc}
\toprule
Dataset & Method & Seg. & Emb. & Ret./rerank & Non-LLM total & LLM call \\
\midrule
SemCacheCls.
& MVR-cache & 23.00 & 32.00 & 0.35 & 55.35 & 1234.60 \\
& vCache    & --    & 25.00 & 0.20 & 25.20 & 1234.60 \\
\midrule
SemCacheSQ
& MVR-cache & 28.00 & 32.00 & 0.35 & 60.35 & 3004.20 \\
& vCache    & --    & 23.00 & 0.20 & 23.20 & 3004.20 \\
\midrule
PromptBench
& MVR-cache & 22.00 & 32.00 & 0.35 & 54.35 & 3352.00 \\
& vCache    & --    & 25.00 & 0.30 & 25.30 & 3352.00 \\
\midrule
QNLI
& MVR-cache & 23.00 & 32.00 & 0.35 & 55.35 & 4273.00 \\
& vCache    & --    & 20.00 & 0.30 & 20.30 & 4273.00 \\
\bottomrule
\end{tabular}
\end{small}
\end{center}
\vskip -0.1in
\end{table*}


\subsection{Ablation studies}
\label{sec: ablation_studies}
\begin{figure}
    \centering
    \includegraphics[width=0.5\linewidth]{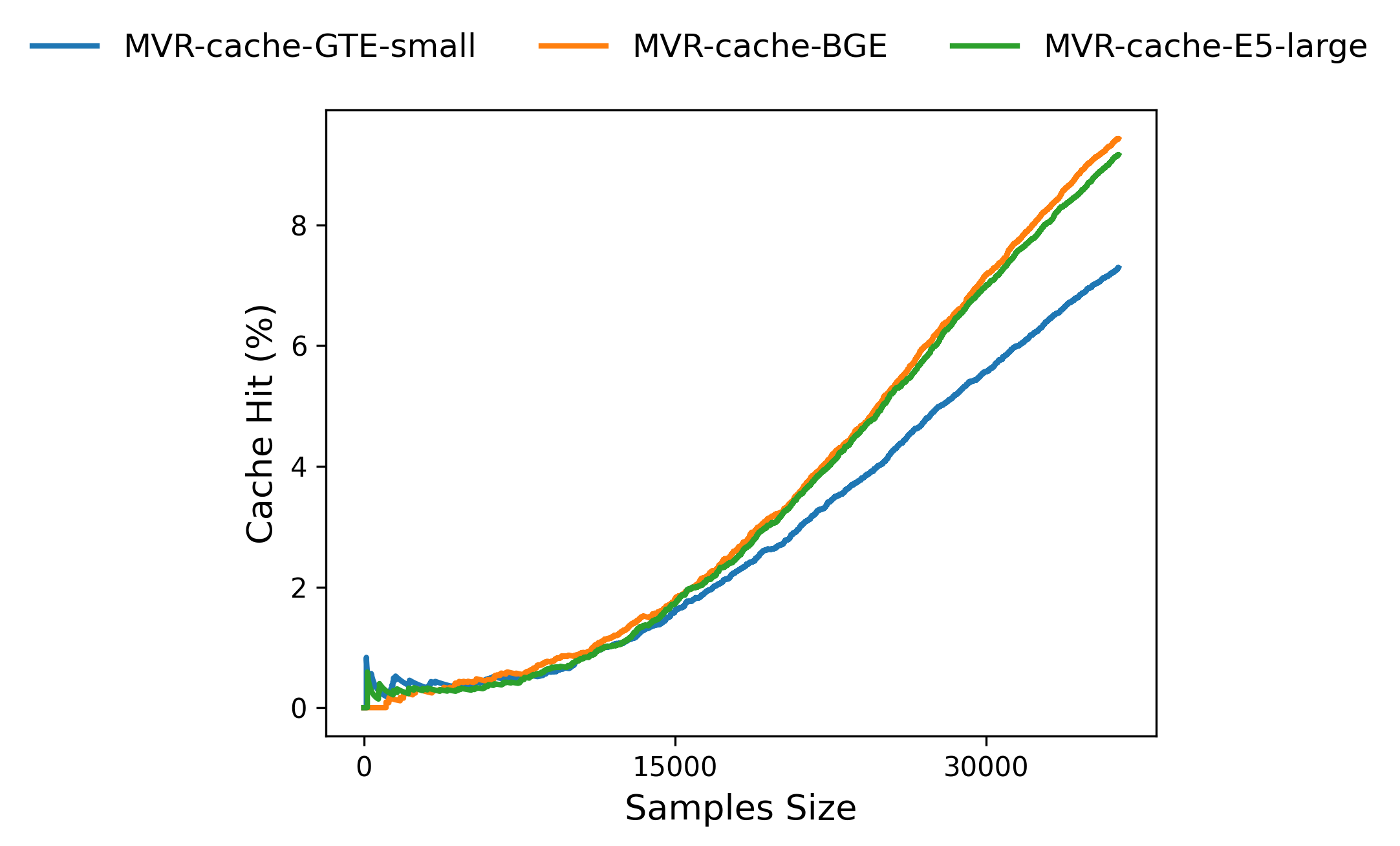}
    \caption{Ablation on embedding models}
    \label{fig:ablation_embedding}
\end{figure}

We further perform ablation studies for \ours\ using the promptbench dataset with the error bound $\delta=0.01$. First of all, in Figure \ref{fig:ablation_embedding}, we compare the effect of \ours\ using varied embedding models including the default BGE model, GTE Large model \cite{wang2022text} and E5-large \cite{wang2022text}, which suggest that there is almost no performance difference for \ours\ between different embedding models.

\begin{figure}
    \centering
    \includegraphics[width=0.5\linewidth]{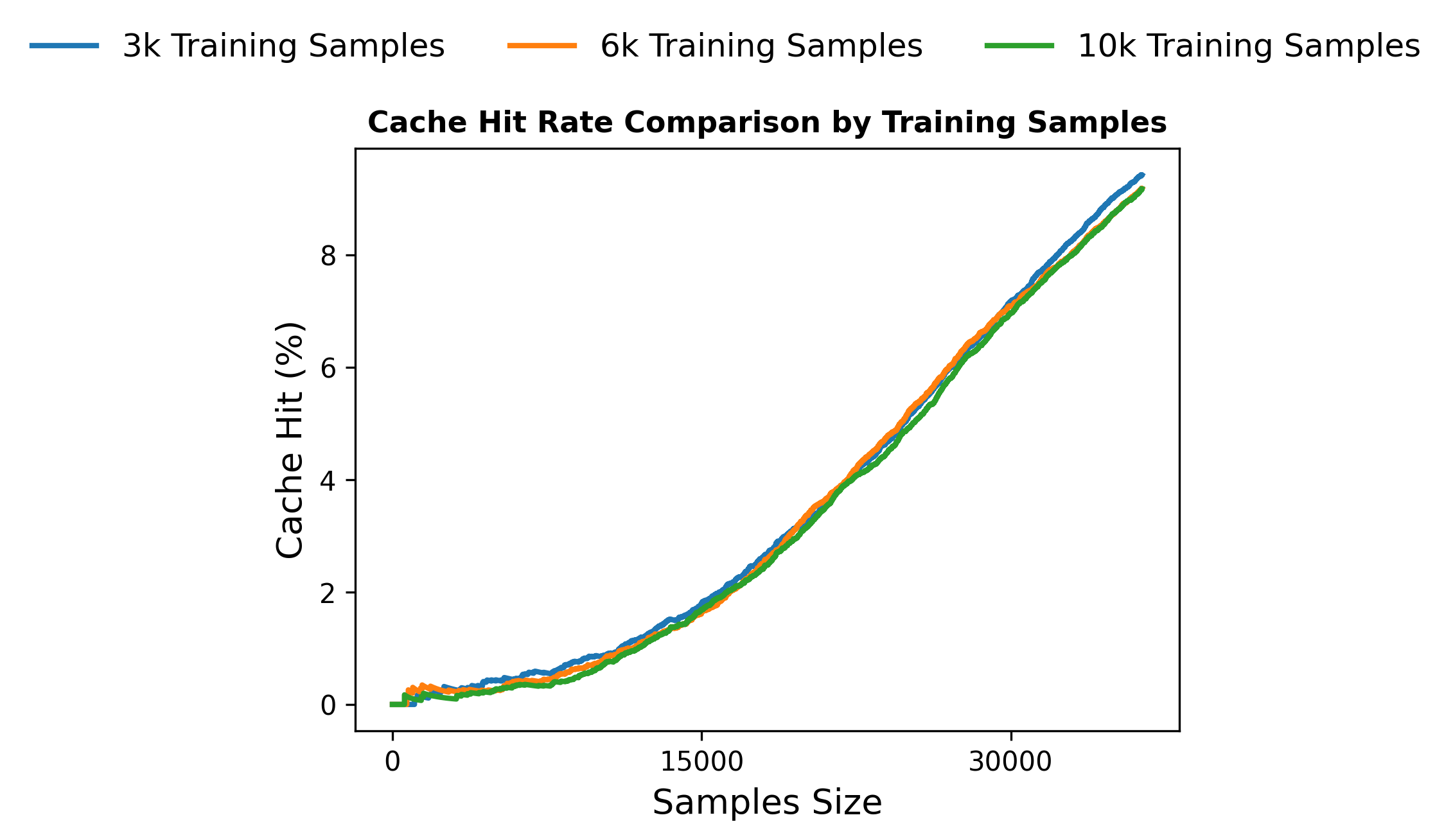}
    \caption{Ablation on the training set size}
    \label{fig:ablation_training_set_size}
\end{figure}

In addition, we also vary the number of training samples for training the segmentation model. As Figure \ref{fig:ablation_training_set_size} shows, regardless of the training set sizes, \ours\ ends up with almost the same performance. This thus indicates that a training set with 3K training samples is sufficient to obtain a reasonable segmentation model.

\begin{figure}
    \centering
    \includegraphics[width=0.4\linewidth]{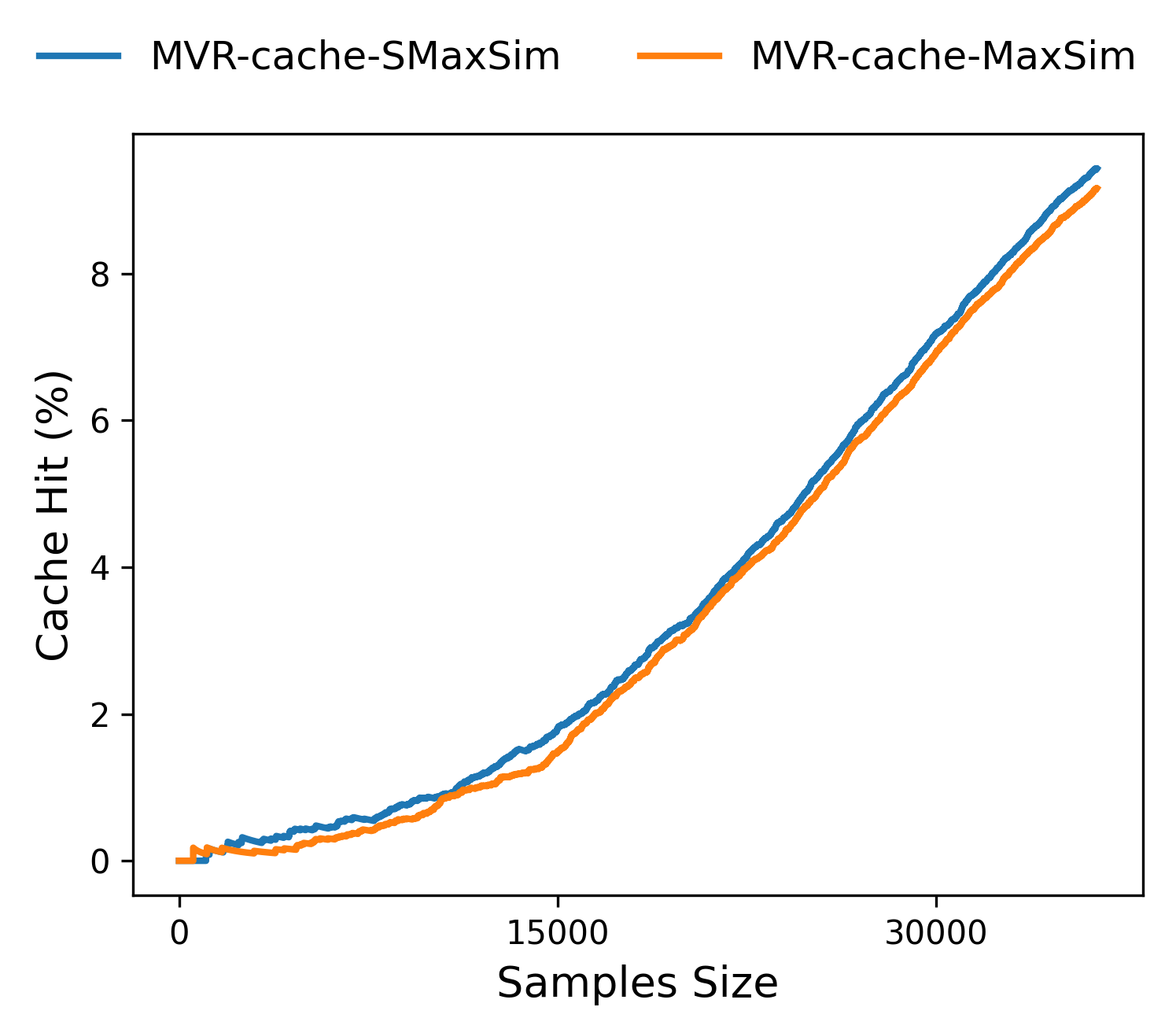}
    \caption{Comparison of using the symmetric MaxSim score and the vanilla MaxSim score}
    \label{fig:maxsim_comparison}
\end{figure}

In the main experiments, we use punctuation marks as candidate split positions. To test whether this design choice limits the segmentation policy, we compare it with three alternative candidate sets on PromptBench while keeping the rest of the MVR-cache pipeline fixed: keyword-level boundaries, token-level boundaries, and sentence-level boundaries. Keyword-level boundaries include punctuation marks and selected keywords such as ``and'' and ``or''; token-level boundaries include punctuation marks and spaces; sentence-level boundaries include punctuation marks excluding commas; and punctuation-level boundaries correspond to our default setting.

Figure~\ref{fig:split_sensitivity} shows the cumulative cache hit rate as the number of incoming prompts increases. The curves are highly similar across all candidate sets, indicating that giving the policy a larger set of possible split positions does not lead to a meaningful improvement. In particular, punctuation-level splitting remains competitive with, and slightly better than, the larger candidate sets in the final cache hit rate. This suggests that punctuation marks already provide a compact and effective search space for prompt segmentation in our setting.

\begin{figure}[t]
\centering
\includegraphics[width=\linewidth]{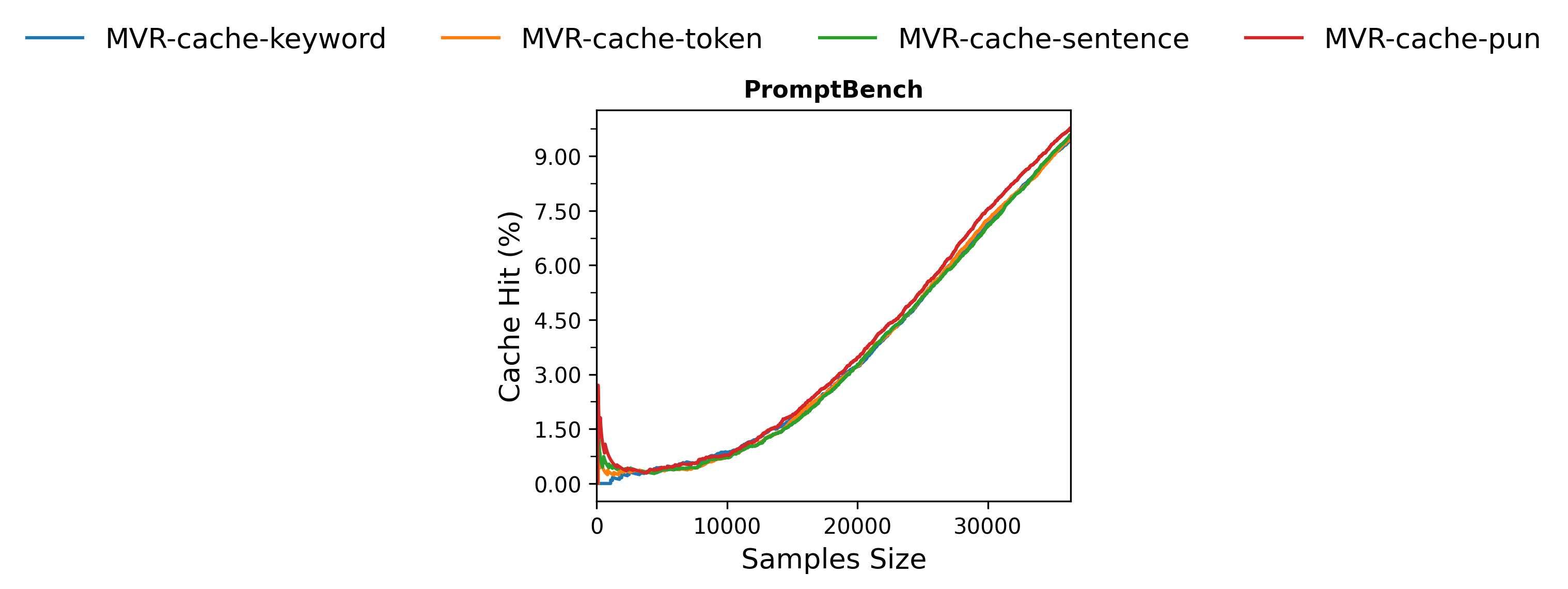}
\caption{Sensitivity analysis of candidate split positions on PromptBench. We report the cumulative cache hit rate as the number of incoming prompts increases.}
\label{fig:split_sensitivity}
\end{figure}

We also report the number of segments selected by the learned policy in Table~\ref{tab:segment_statistics}. The average segment count is close to one for short search queries and increases for longer question-answering prompts. This shows that the policy does not use a fixed segmentation length, but adapts the number of segments to the input structure.

\begin{table}[t]
\caption{Statistics of the number of segments selected by MVR-cache.}
\label{tab:segment_statistics}
\vskip 0.15in
\begin{center}
\begin{small}
\begin{sc}
\begin{tabular}{lccc}
\toprule
Dataset & Min & Max & Mean \\
\midrule
SemCacheSearchQueries  & 1 & 4   & 1.01 \\
SemCacheClassification & 1 & 93  & 2.64 \\
QNLI                   & 1 & 149 & 5.31 \\
PromptBench            & 1 & 262 & 7.67 \\
\bottomrule
\end{tabular}
\end{sc}
\end{small}
\end{center}
\vskip -0.1in
\end{table}

In the end, we also replace the symmetric MaxSim score, i.e., SMaxSim, used in \ours\ with the vanilla MaxSim score, which is unidirectional. The results are reported in Figure \ref{fig:maxsim_comparison}, which reveals a slight performance gain if the symmetric MaxSim score is used.

\subsection{Qualitative Example on Long-Context Prompts}
\label{sec:long_context_example}

We further examine whether MVR-cache remains effective for very long prompts. Long-context inputs are challenging for semantic caching because a single-vector representation can dilute local semantic details. A long prompt may contain many repeated or related sub-questions, and the information that determines response equivalence may appear only in specific parts of the input. As a result, single-vector retrieval may retrieve a broadly related but response-inequivalent nearest neighbor.

Table~\ref{tab:long_context_example} shows a representative example. The incoming prompt contains around 10K tokens and consists of many sub-questions about the CASA 1000 electricity transmission project. To make the example readable, we show only representative excerpts from the incoming prompt and the retrieved nearest-neighbor prompts, omitting repeated sub-questions with ellipses.

\begin{table*}[t]
\caption{A qualitative long-context example. The incoming prompt is about the CASA 1000 electricity transmission project. MVR-cache retrieves a response-equivalent CASA 1000 prompt and produces a cache hit, while vCache retrieves a response-inequivalent prompt about solar hot water systems and misses the cache.}
\label{tab:long_context_example}
\vskip 0.15in
\begin{center}
\begin{small}
\setlength{\tabcolsep}{5pt}
\renewcommand{\arraystretch}{1.08}
\begin{tabular}{p{0.13\textwidth}p{0.72\textwidth}p{0.08\textwidth}}
\toprule
\multicolumn{3}{l}{\textbf{Incoming prompt excerpt}} \\
\multicolumn{3}{p{0.94\textwidth}}{
``Describe the project named CASA 1000. What is the intended purpose of the CASA 1000 project? \ldots
CASA 1000 will transmit 1000 MW of surplus electricity from Tajikistan to Pakistan with power transit through Afghanistan. \ldots
Further clarify the role and involvement of Tajikistan, Pakistan, and Afghanistan. \ldots
Describe potential risks, timeline, budget, revenue sharing, and regional impacts.''
} \\
\midrule
Method & Retrieved nearest-neighbor prompt excerpt & Result \\
\midrule
vCache
&
``Explain in detail the capacity of a solar hot water system \ldots including global installations as of 2007, approximately 154 thermal gigawatt (GWth). \ldots
What does this capacity represent in terms of the number of homes, hospitals, or businesses that can be supported? \ldots
Describe the geographical distribution of these installations and the average system size.''
&
\multicolumn{1}{c}{Miss} \\
\midrule
MVR-cache
&
``The CASA 1000 project promotes regional energy security through regional cooperation. \ldots
Explain how regional cooperation on energy projects like CASA 1000 enhances regional energy security. \ldots
Describe the planned project objectives, participating countries, infrastructure deployment, risks, and regional economic implications.''
&
\multicolumn{1}{c}{Hit} \\
\bottomrule
\end{tabular}
\end{small}
\end{center}
\vskip -0.1in
\end{table*}

The difference comes from how the two methods represent the long prompt. vCache compresses the entire prompt into one embedding. In this example, the retrieved prompt is broadly energy-related, but it concerns solar hot water capacity rather than the CASA 1000 project. Since the two prompts require different LLM responses, vCache misses the cache.

In contrast, MVR-cache decomposes the long prompt into multiple segments and compares them with cached prompts using $\text{SMaxSim}_{\Theta}$. The learned segments preserve specific local semantics, such as the project goal, the participating countries, implementation risks, and regional impacts. These segment-level matches allow MVR-cache to retrieve a nearest neighbor from the same response-equivalence set and produce a correct cache hit. This example illustrates why variable-length segmentation is useful for long-context prompts: the relevant matching evidence may be distributed across several local spans and can be obscured by a single global representation.

\subsection{Empirically validate Assumption \ref{assp: similarity_dist}}
Recall that our theoretical results rely on Assumption~\ref{assp: similarity_dist}, which states that the class-conditional distributions of the learned segmentation-aware similarity scores approximately follow normal distributions. To empirically validate this assumption, we evaluate the learned $\text{SMaxSim}_{\Theta}$ scores on all four datasets. Specifically, for each dataset, we randomly sample prompts and compute their similarity scores against their retrieved nearest-neighbor prompts, then group the scores according to their corresponding labels. As shown in Figures ~\ref{fig:assumption_SemCacheSearchClassification} - ~\ref{fig:assumption_QNLI}, the resulting class-conditional score distributions are approximately normal across all datasets, providing empirical support for Assumption~\ref{assp: similarity_dist}. For the additional assumption used in Theorem~\ref{theorem: objectives}, i.e., Assumption~\ref{assp: balanced}, we acknowledge that it may be strong in practice. To address this, Lemma~\ref{lemma: imbalanced} shows that under Assumption~\ref{assp: similarity_dist} alone, minimizing the class-rebalanced objective still improves the cache hit rate under the same error bound. Therefore, since Assumption~\ref{assp: similarity_dist} is empirically supported across all four datasets, the theoretical implication of Lemma~\ref{lemma: imbalanced} remains meaningful.

\begin{figure}
    \centering
    \includegraphics[width=0.5\linewidth]{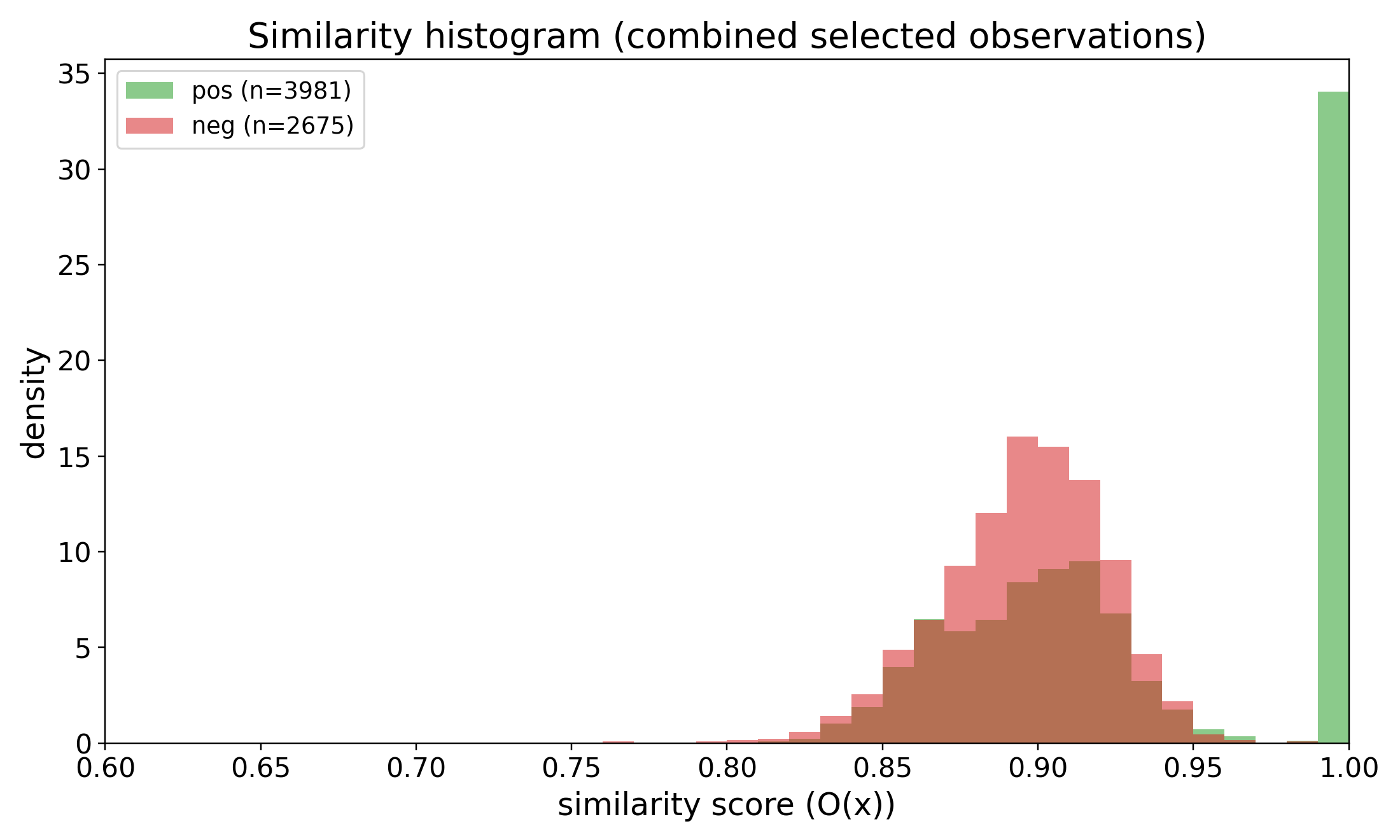}
    \caption{Empirically validate Assumption \ref{assp: similarity_dist} on SemCacheSearchClassification dataset}
    \label{fig:assumption_SemCacheSearchClassification}
\end{figure}

\begin{figure}
    \centering
    \includegraphics[width=0.5\linewidth]{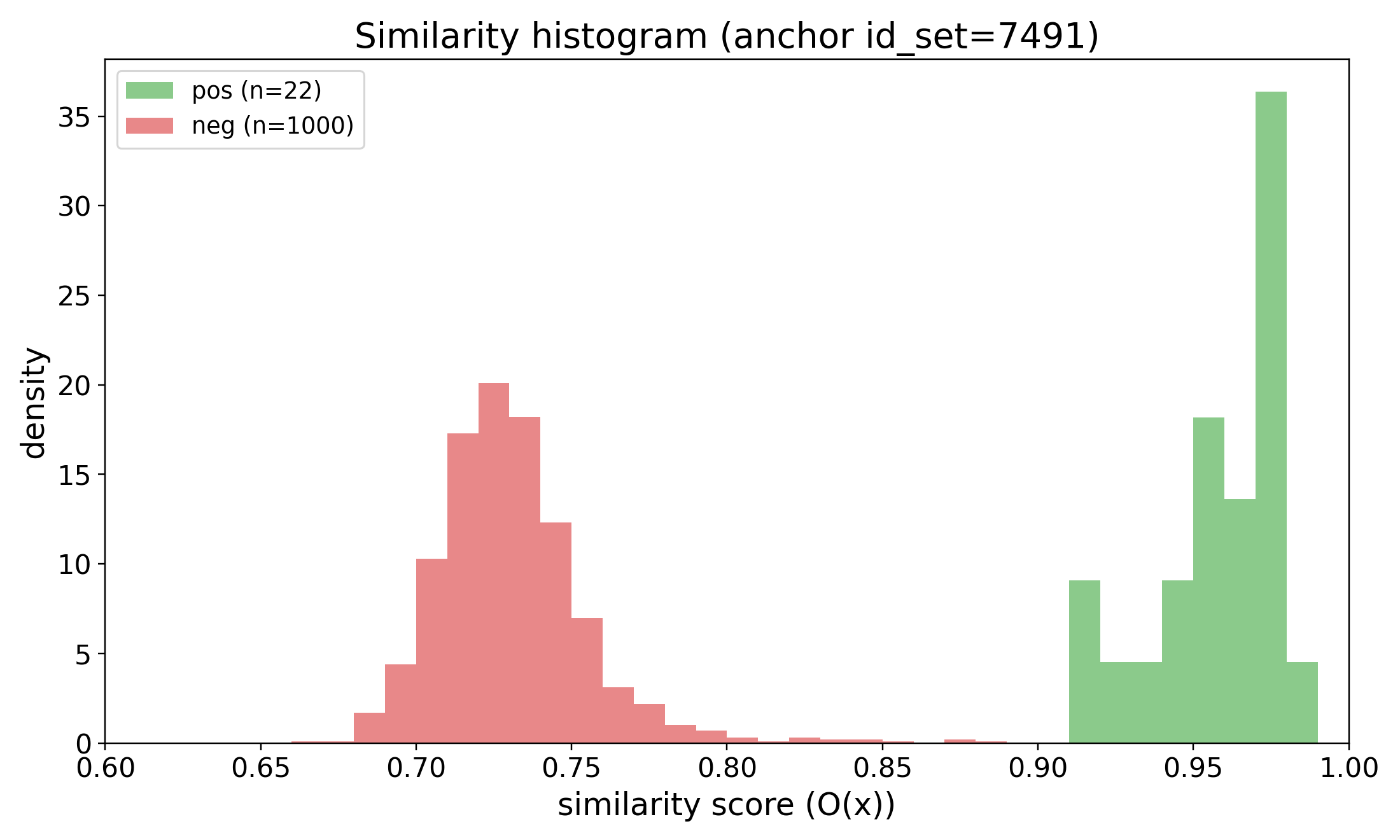}
    \caption{Empirically validate Assumption \ref{assp: similarity_dist} on SemCacheSearchQueries dataset}
    \label{fig:assumption_SemCacheSearchQueries}
\end{figure}

\begin{figure}
    \centering
    \includegraphics[width=0.5\linewidth]{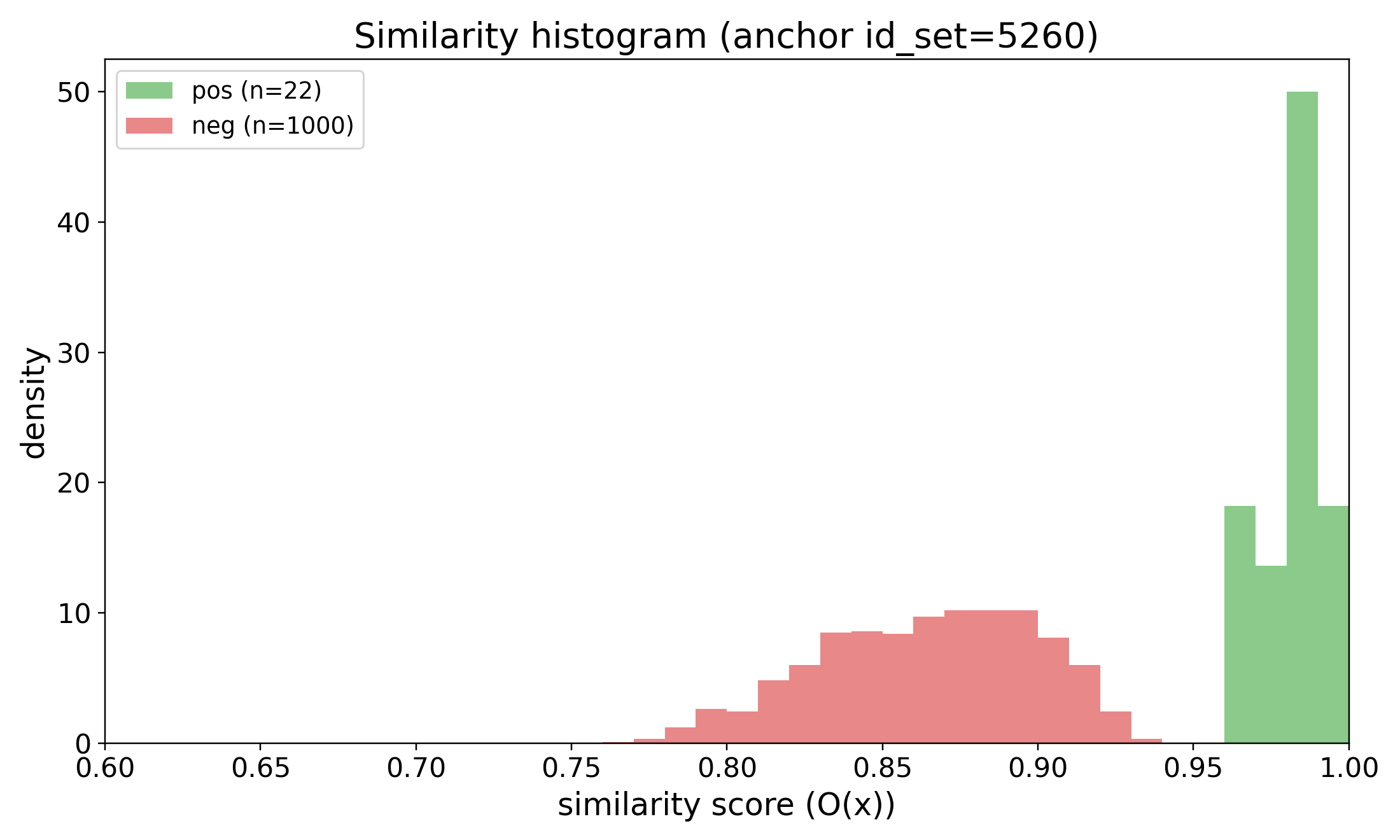}
    \caption{Empirically validate Assumption \ref{assp: similarity_dist} on PromptBench dataset}
    \label{fig:assumption}
\end{figure}

\begin{figure}
    \centering
    \includegraphics[width=0.5\linewidth]{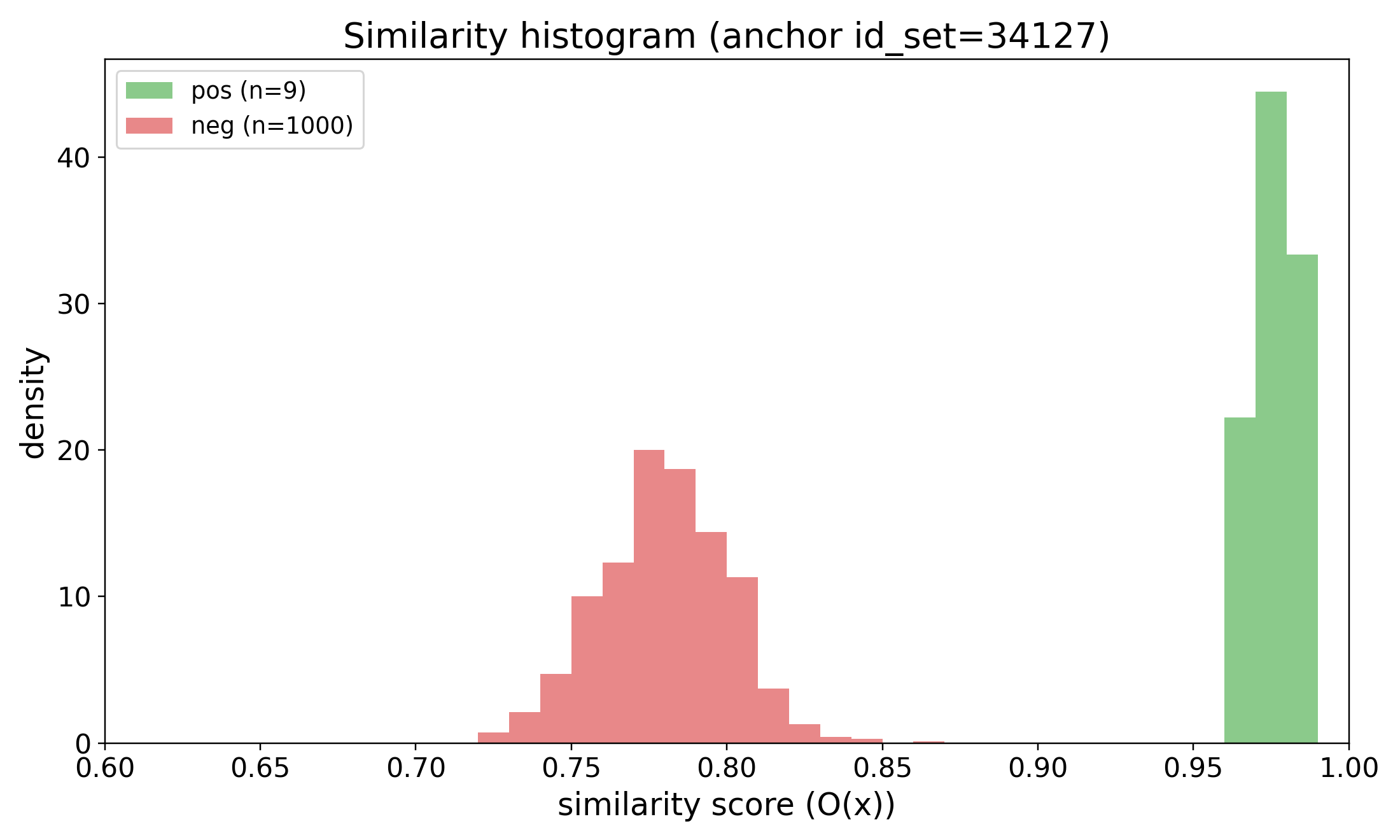}
    \caption{Empirically validate Assumption \ref{assp: similarity_dist} on QNLI dataset}
    \label{fig:assumption_QNLI}
\end{figure}



\end{document}